\algnewcommand\algorithmicswitch{\textbf{switch}}
\algnewcommand\algorithmiccase{\textbf{case}}
\algnewcommand\algorithmicassert{\texttt{assert}}
\algnewcommand\Assert[1]{\State \algorithmicassert(#1)}%
\newenvironment{tight_enumerate}{
\begin{enumerate}
  \setlength{\itemsep}{0pt}
  \setlength{\parskip}{0pt}
}{\end{enumerate}}
\title{Distance Measures for Embedded Graphs} 
\author[1]{Hugo A. Akitaya\footnote{Supported by National Science Foundation grants CCF-1422311 and CCF-1423615, and the Science Without Borders scholarship program.}}
\author[2]{Maike Buchin}
\author[2]{Bernhard Kilgus\footnote{Supported by the Deutsche Forschungsgemeinschaft (DFG), project BU 2419/3-1}}
\author[2]{Stef Sijben}
\author[3]{Carola Wenk\footnote{Supported by National Science Foundation grant CCF-1618469}}
\affil[1]{Department of Computer Science, Tufts University, Medford, MA, USA\\
	\texttt{hugo.alves\_akitaya@tufts.edu}}
\affil[2]{Department of Mathematics, Ruhr University Bochum, Bochum, Germany\\
	\texttt{Maike.Buchin|Bernhard.Kilgus|Stef.Sijben@rub.de}}
\affil[3]{Department of Computer Science, Tulane University, New Orleans\\
	USA, \texttt{cwenk@tulane.edu}}
\newcommand{\fr}{Fr\'echet distance\xspace}
\newcommand{\wfr}{weak Fr\'echet distance\xspace}
\newcommand{\eps}{\varepsilon}
\newcommand{\dist}{\delta_G}
\newcommand{\distWeak}{\delta_{wG}}
\newcommand{\distDir}{\vec{\delta}_G}
\newcommand{\distDirWeak}{\vec{\delta}_{wG}}
\newcommand{\algoref}[1]{Algorithm~\ref{#1}}
\newcommand{\figref}[1]{Figure~\ref{#1}}
\newcommand{\defref}[1]{Definition~\ref{#1}}
\newcommand{\lemref}[1]{Lemma~\ref{#1}}
\newcommand{\thmref}[1]{Theorem~\ref{#1}}
\newcommand{\secref}[1]{Section~\ref{#1}}
\newtheorem{theorem}{Theorem}
\newtheorem{lemma}{Lemma}
\newtheorem{definition}{Definition}
\newtheorem{problem}{Problem}
\newtheorem{observation}{Observation}
\begin{document}

\maketitle

\begin{abstract}
  We introduce new distance measures for comparing straight-line embedded 
graphs based on the \fr and the weak Fr\'echet distance.
These graph distances are defined using continuous mappings
and thus take the
combinatorial structure as well as the geometric embeddings of the
graphs into account.  We present a general algorithmic approach for
computing these graph distances. Although we show that deciding the
distances is NP-hard for general embedded graphs, we prove that our
approach yields polynomial time algorithms if the graphs are trees,
and for the distance based on the weak \fr if the graphs are planar
embedded.  Moreover, we prove that deciding the distances based on
the \fr remains NP-hard for planar embedded graphs and show how our
general algorithmic approach yields an exponential time algorithm and
a polynomial time approximation algorithm for this case. Our work
combines and extends the work of Buchin et al.~\cite{bsw-dmeg-17} and
Akitaya et al.~\cite{bsw-dmeg-19} presented at EuroCG.

\end{abstract}

\section{Introduction}
There are many applications that work with graphs that are embedded in
Euclidean space.  One task that arises in such applications is
comparing two embedded graphs.  For instance, the two graphs to be
compared could be two different representations of a geographic
network (e.g., roads or rivers).  Oftentimes these networks are not
isomorphic, nor is one interested in subgraph isomorphism, but one
would like to have a mapping of one graph to the other, and ideally such a mapping would be continuous. For instance,
this occurs when we have a ground truth of a road network and a
simplification or reconstruction of the same network and we would like
to measure the error of the latter.  In this case, a mapping would
identify the parts of the ground truth that are
reconstructed/simplified and would allow to study the local
error.

We present new graph distance measures that are well-suited for comparing such graphs.
Our distance measures are natural generalizations of
the \fr~\cite{alt1995computing} to graphs and require a continuous mapping, but they don't require graphs to be homeomorphic.
One graph is mapped continuously to a portion of the other, in such a way that edges are mapped to paths in
the other graph.  The graph distance is then defined as the maximum of
the strong (or weak) Fr\'echet distances between the edges and the paths they are
mapped to.
This results in a directed or asymmetric notion of distance, and we define the
corresponding undirected distances as the maximum of both directed distances. The
directed distances naturally arise when seeking to measure subgraph
similarity, which requires mapping one graph to a subgraph of the
other.

For comparing two not necessarily isomorphic graphs only few measures
were known previously. One such measure is the traversal distance
suggested by Alt et al.~\cite{alt2003262} and another is the geometric
edit distance suggested by Cheong et al.~\cite{cgkss-msgg-09}.
The traversal distance converts graphs into curves by traversing the
graphs continuously and comparing the resulting curves using
the \fr. It is also a directed distance that compares the traversal of
one graph with the traversal of a part of the other graph. However, an
explicit mapping between the two graphs is not established, and part of
the connectivity information of the graphs is lost due to the conversion to curves.
%
The geometric edit distance minimizes the cost of edit operations from
one graph to another, where cost is measured by Euclidean lengths and
distances. But again, connectivity is not well maintained.
Figure~\ref{fig:newexamples} shows some examples of graphs
where our graph distances, the traversal distance, and the geometric
edit distance differ. In particular, only our graph distances capture the
difference in connectivity between graphs $G_1$ and $G_2$, as well as between $H_1$
and $H_2$.

\begin{figure}[b]
\centering
 \includegraphics[width=0.8\textwidth]{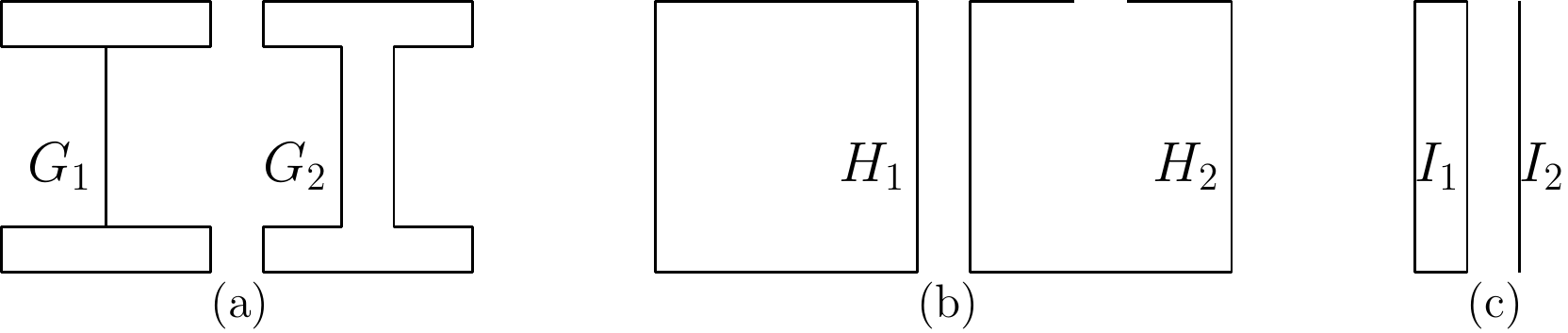}
 \caption{Examples where our graph distances, the traversal
   distance, and the geometric edit distance differ.
For clarity the graphs are shown side-by-side, but in the embedding they lie on top of each other.
   %
   (a): Graphs $G_1$ and $G_2$ have large graph distance (because $G_1$ needs to mapped to one side of $G_2$), large edit distance (because a long edge needs to be added), but small traversal distance.
  (b): Graphs $H_1$ and $H_2$ have large graph distance  (because all of $H_1$ needs
   to mapped to only one side of $H_2$), but small traversal distance and small edit
   distance.
  (c): Graphs $I_1$ and $I_2$ have small graph distance and small traversal distance, but 
  a large edit distance (because a long edge needs to be
   added).}  \label{fig:newexamples}
\end{figure}

Our graph distances map one graph onto a subgraph of the
other and they measure the \fr\ between the mapped parts (see \secref{subsec:definitions} for a
formal definition). Hence connectivity information is preserved and an
explicit mapping between the two (sub-)graphs is established.  One
possible application of these new graph distances is the comparison of
geographic networks, for instance evaluating the quality of map
reconstructions and map simplification. In \secref{sec:experiments}, we show some
experimental results on graphs of map reconstructions that illustrate
that our approach considers both, geometry and connectivity.

\paragraph*{Related work}
A few approaches have been proposed in the literature for comparing
geometric embedded graphs.
Subgraph-isomorphism considers only the combinatorial structure of the
graphs and not its geometric embedding.  It is NP-hard to compute in
general, although it can be computed in linear time if both graphs are
planar and the pattern graph has constant size~\cite{eppstein}.
If we consider the graphs as metric spaces, the Gromov-Hausdorff
distance (GH) between two graphs is the minimum Hausdorff distance
between isometric embeddings of the graphs into a common metric
space. While it is unknown how to compute GH for general graphs,
recently Agarwal et al.~\cite{Agarwal:Gromov} gave a polynomial
time approximation algorithm for the GH between a pair of
metric trees. 
We are however interested in measuring the similarity
between two specific embeddings of the graphs.
Armiti et al.~\cite{geometric:graph:matching} suggest a probabilistic
approach for comparing graphs that are not required to be isomorphic, 
using spatial properties of
the vertices and their neighbors. However, they require vertices to be matched to vertices,
which can result
in a large graph distance when an edge in one graph is subdived in the
other graph. Furthermore, the spatial properties used are invariant to
translation and rotation, whereas we consider a fixed embedding.
%
Cheong et al.~\cite{cgkss-msgg-09} proposed the geometric edit
distance for comparing embedded graphs, however it is NP-hard to compute.
Alt et al.~\cite{alt2003262} defined the traversal distance, which is
most similar to our graph distance measures, but it does not preserve
connectivity. See Section~\ref{subsec:travdist} for a detailed
comparison with the traversal distance.

For assessing the quality of map construction algorithms, several
approaches have been proposed.  One approach is to compare all paths
\cite{aw-pbdsmc-15} or random samples of shortest paths
\cite{Karagiorgou:2012:VTD:2424321.2424334}. However, these measures
ignore the local structure of the graphs. In order to capture more
topological information, Biagioni and Eriksson developed a
sampling-based distance~\cite{be-irmgp-12} and Ahmed et
al.\ introduced the local persistent homology
distance~\cite{afw-lphbdbm-14}.
The latter distance measure focuses on comparing the topology and does
not encode geometric distances between the graphs.
The sampling-based distance is not a formally
defined distance measure, and it crucially depends on parameters (in
particular $matched\_distance$, to decide if points are sufficiently
close to be matched); in practice it is unclear how these parameters
should be chosen. However, it captures the number of matched edges,
which is useful when comparing reconstructed road networks.
In contrast to these measures, our graph distances capture more
topology than the path-based distance \cite{aw-pbdsmc-15}, and capture
differences in geometry better than the local persistent homology
distance \cite{afw-lphbdbm-14}.
Also our graph distances are well-defined distance measures that do not
require specific parameters to be set, unlike
\cite{be-irmgp-12}.

\paragraph*{Contributions}
We present new graph distance measures that compare graphs based on their
geometric embeddings while respecting their combinatorial
structure. To the best of our knowledge, our graph distances are the
first to establish a continuous mapping between the embedded
graphs.
In Section~\ref{sec:pre} we define several variants of our graph
distances (weak, strong, directed, undirected) and study their
properties.  In Section~\ref{sec:alg} we develop an algorithmic
approach for computing the graph distances. On the one hand, we prove
that for general embedded graphs, deciding these distances is
NP-hard. On the other hand, we also show that our algorithmic approach
gives polynomial time algorithms in several cases, e.g., when one
graph is a tree.
The most interesting case is when both graphs are plane. Here, we show
that our algorithmic approach yields a quadratic time algorithm for
the weak \fr.  In Section~\ref{sec:plane_graphs} we focus on plane
graphs and the strong \fr. For this case, we show that the problem is
NP-hard, even though it is polynomial time solvable for the \wfr.
Furthermore, we show how to obtain an approximation, that depends on
the angle between incident edges, in polynomial time and an exact
result in exponential time.

\section{Graph Distance Definition and Properties}\label{sec:pre}
Let $G_1=(V_1,E_1)$ and $G_2=(V_2,E_2)$ be two undirected graphs with
vertices embedded as points in $\mathbb{R}^d$ (typically
$\mathbb{R}^2$) 
that are connected by straight-line
edges. We refer to such graphs as {\em (straight-line) embedded
  graphs}.
Generally, we do not require the graphs to be planar. We denote a
crossing free embedding of a planar graph shortly as a {\em plane
  graph}. Note that for plane graphs $G_1$ and $G_2$, crossings
between edges of $G_1$ and edges of $G_2$ are still allowed.

\subsection{Strong and Weak Graph Distance}\label{subsec:definitions}
We define distance measures between embedded graphs that are based on
mapping one graph to the other. We consider a particular type of graph
mappings, as defined below:

\begin{definition}[Graph Mapping]\label{def:graphMapping}
We call a mapping $s\colon G_1\rightarrow G_2$ a \emph{graph mapping}
if
\begin{tight_enumerate}
  \item it maps each vertex $v\in V_1$ to a point $s(v)$ on an edge of $G_2$, and
  \item it maps each edge $\{u,v\}\in E_1$ to a simple path from $s(u)$ to $s(v)$ in the embedding of $G_2$.
\end{tight_enumerate}
\end{definition}

Note that a graph mapping results in a continuous map if we consider
the graphs as topological spaces.
To measure similarity between edges and mapped paths, our graph distances use the
\fr\ or the weak \fr, which are 
%
popular distance measures
for curves \cite{alt1995computing}.  For two curves $f,g\colon [0,1]
\rightarrow \mathbb{R}^d$ their {\em \fr} is defined as
\begin{ceqn}
\begin{align*}
\delta_F(f,g)=\inf_{\sigma\colon [0,1]\rightarrow[0,1]} \ \max_{t\in [0,1]} \ ||f(t)-g(\sigma(t))||,
\end{align*}
\end{ceqn}
where $\sigma$ ranges over orientation preserving homeomorphisms. 
The {\em weak \fr} is
\begin{ceqn}
\begin{align*}
\delta_{wF}(f,g)=\inf_{\alpha,\beta\colon [0,1]\rightarrow[0,1]} \ \max_{t\in [0,1]} \ ||f(\alpha(t))-g(\beta(t))||\;,
\end{align*}
\end{ceqn}
where $\alpha,\beta$ range over all continuous onto functions that
keep the endpoints fixed.
%

Typically, the \fr is illustrated by a man walking his dog. Here, the
\fr equals the shortest length of a leash that allows the man and the
dog to walk on their curves from beginning to end. For the \wfr man
and dog may walk backwards on their curves, for the \fr they may not.
The \fr and \wfr between two polygonal curves of complexity $n$ can be
computed in $O(n^2 \log n)$ time~\cite{alt1995computing}.
Now, we are ready to define our graph distance measures.

\begin{definition}[Graph Distances]\label{def:distances}
  We define the \emph{directed (strong) graph distance} $\distDir$ as
  \begin{ceqn}
\begin{align*}
\distDir (G_1,G_2) = \inf_{s:G_1\rightarrow G_2} \max_{e\in E_1} \delta_F(e,s(e))
\end{align*}
\end{ceqn}
  and the \emph{directed weak graph distance} $\distDirWeak$ as
    \begin{ceqn}
\begin{align*}
\distDirWeak (G_1,G_2) = \inf_{s:G_1\rightarrow G_2} \max_{e\in E_1} \delta_{wF}(e,s(e))\;,
\end{align*}
\end{ceqn}
where $s$ ranges over graph mappings from $G_1$ to $G_2$, and $e$
and its image $s(e)$ are interpreted as curves in the plane.  The {\em undirected graph distances}
are
\begin{displaymath}
\ \dist(G_1,G_2) = \max (\distDir (G_1,G_2), \distDir(G_2,G_1)) \qquad \text{ and }
\end{displaymath}
\begin{displaymath}
\distWeak(G_1,G_2) = \max (\distDirWeak(G_1,G_2), \distDirWeak(G_2,G_1)).
\end{displaymath}
\end{definition}

According to Definition~\ref{def:graphMapping}, a graph mapping $s$
maps each edge of $G_1$ to a simple path $s(e)$ in $G_2$.
This is justified by the following observation: Mapping $e$ to a
non-simple path $s'(e)$, where $s(e)$ and $s'(e)$
have the same endpoints and $s(e) \subset s'(e)$, does not
decrease the (weak) graph distance because $\delta_{(w)F}(e,
s(e)) \leq \delta_{(w)F}(e, s'(e))$.
From this observation also follows that we cannot decrease
$\distDir(G_1, G_2)$ by adding additional vertices to subdivide an edge
$e$ of $G_1$: While the concatenation of the resulting mapped paths in
$G_2$ may not be simple, it can be replaced by the image of the entire
edge $e$, which by the observation has to be simple.

We state a first important property of the graph distances:
\begin{lemma}
  For embedded graphs, the strong graph distances and the weak graph
  distances fulfill the triangle inequality. The undirected distances
  are pseudo-metrics. For plane graphs they are metrics.
\end{lemma}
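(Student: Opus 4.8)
The plan is to reduce everything to the two directed distances $\distDir$ and $\distDirWeak$, for which the only substantial point is the triangle inequality; symmetry, the vanishing of $\dist(G,G)$, and the metric/pseudo-metric distinction then follow with little extra work. I would first establish $\distDir(G_1,G_3)\le \distDir(G_1,G_2)+\distDir(G_2,G_3)$ (and identically for $\distDirWeak$) by composing graph mappings, then derive the undirected triangle inequality by applying this in both directions and taking the maximum, and finally treat the metric property through a ``distance zero implies equal image'' argument, which is where planarity enters.

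For the directed triangle inequality, fix $\eps>0$ and choose graph mappings $s_{12}\colon G_1\to G_2$ and $s_{23}\colon G_2\to G_3$ that are $\eps$-close to the respective infima. Since graph mappings are continuous maps of the underlying topological spaces, the composition $s_{13}=s_{23}\circ s_{12}$ is again continuous and sends every vertex of $G_1$ to a point lying on an edge of $G_3$. For a fixed edge $e\in E_1$ I would bound $\delta_F(e,s_{13}(e))$ using the triangle inequality for the curve \fr:
\[
\delta_F\bigl(e,\,s_{23}(s_{12}(e))\bigr)\ \le\ \delta_F\bigl(e,\,s_{12}(e)\bigr)\ +\ \delta_F\bigl(s_{12}(e),\,s_{23}(s_{12}(e))\bigr).
\]
The first summand is at most $\distDir(G_1,G_2)+\eps$ by choice of $s_{12}$. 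For the second summand I would use that $s_{12}(e)$ is a simple path in $G_2$ that decomposes, at the vertices it visits, into sub-curves $\gamma_1,\dots,\gamma_k$, each contained in a single edge $f_i$ of $G_2$. Restricting the optimal matching for $(f_i,s_{23}(f_i))$ to $\gamma_i$ shows $\delta_F(\gamma_i,s_{23}(\gamma_i))\le \delta_F(f_i,s_{23}(f_i))\le \distDir(G_2,G_3)+\eps$, and because $s_{23}$ sends each shared vertex to a single point of $G_3$, the per-piece matchings agree at the joints and glue into one monotone matching. Hence $\delta_F(s_{12}(e),s_{23}(s_{12}(e)))\le \distDir(G_2,G_3)+\eps$ as well.

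The remaining issue is that $s_{13}$ as defined need not be a \emph{graph mapping}, since $s_{23}(s_{12}(e))$ may fail to be simple. Here I would invoke the observation recorded just before the lemma: any walk in $G_3$ with fixed endpoints contains a simple sub-path with the same endpoints, and replacing the walk by this sub-path does not increase the (weak) \fr. Doing this for every edge turns $s_{13}$ into a genuine graph mapping with $\max_{e}\delta_F(e,s_{13}(e))\le \distDir(G_1,G_2)+\distDir(G_2,G_3)+2\eps$; letting $\eps\to 0$ gives the directed triangle inequality. The same argument applies verbatim to $\distDirWeak$, since the \wfr also satisfies the curve triangle inequality and its reparametrizations likewise concatenate at fixed endpoints. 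Symmetry of $\dist$ and $\distWeak$ is immediate from the maximum in Definition~\ref{def:distances}, $\dist(G,G)=0$ follows from the identity mapping (each vertex lies on an incident edge, and each edge maps to itself with \fr\ zero), and combining the directed inequality in both directions yields the undirected triangle inequality; together these establish the pseudo-metric claim.

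For the metric property on plane graphs I would show that $\dist(G_1,G_2)=0$ forces $G_1$ and $G_2$ to have identical images. If $\dist(G_1,G_2)=0$, there are graph mappings with $\max_e\delta_F(e,s(e))\to 0$; since the \fr\ dominates the \h, each edge of $G_1$ is a Hausdorff limit of subsets of the closed set $\mathrm{image}(G_2)$ and therefore lies in it, so $\mathrm{image}(G_1)\subseteq\mathrm{image}(G_2)$, and symmetrically the images coincide. For plane graphs the branch points and bends of this common image are geometric invariants that recover the vertices and edges, so $G_1=G_2$ (identifying a plane graph with its image, i.e.\ up to subdivision along straight segments). I expect this last step to be the main obstacle, namely arguing that a crossing-free straight-line drawing is determined by its image, and it is exactly where planarity is essential: for non-plane graphs distinct graphs can share an image---for instance, two straight edges crossing at an interior point versus the same drawing with a degree-four vertex placed at the crossing---so distance zero no longer implies equality and only the pseudo-metric property survives.
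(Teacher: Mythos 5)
Your proposal is correct and follows essentially the same route as the paper: the directed triangle inequality is obtained by composing two near-optimal graph mappings, decomposing the intermediate path of each edge into (partial) edges of $G_2$, gluing the per-edge matchings at shared vertices, and repairing non-simplicity of the concatenated walk via the observation stated just before the lemma, while the metric claim for plane graphs rests on the same ``distance zero implies identical embeddings'' argument, with the crossing-edges example showing why planarity is needed. Your write-up is in fact more detailed than the paper's proof, which compresses the gluing into a single sentence and asserts directly that distance zero forces every edge to be mapped to itself, whereas you additionally handle the fact that the infimum need not be attained (via Fr\'echet dominating Hausdorff) and make explicit the identification of a plane graph with its image up to subdivision.
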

\begin{proof} 
Symmetry follows immediately for the undirected distances.  The
directed distances fulfill the triangle inequality because we can
concatenate two maps and use the triangle inequality of $\R^d$: Let
$G_1$, $G_2$ and $G_3$ be three embedded graphs. An edge $e$ of $G_1$
is mapped to a simple path $p$ in $G_2$. The segments of $p$ are again
mapped to a sequence of simple paths in $G_3$. Thus, when
concatenating two maps, one possible mapping maps each edge $e$ of
$G_1$ to a sequence $S$ of simple paths in $G_3$. Note, that $S$ need
not be simple. However, in that case we can instead map $e$ to a
shortest path $\hat{p}$ in $S$ from beginning to end. As
$\delta_{(w)F}(e,\hat{p}) \leq \delta_{(w)F}(e,S)$ for each edge of
$G_1$, we have $\distDir(G_1,G_2)+\distDir(G_2,G_3) \geq
\distDir(G_1,G_3)$ and $\distDirWeak(G_1,G_2)+\distDirWeak(G_2,G_3)
\geq \distDir(G_1,G_3)$ by definition of the directed (weak) graph
distance as the maximum \fr\ of an edge and its mapping.  Analogously,
the undirected distances fulfill the triangle inequality as well.

For plane graphs, their (weak) graph distance is zero iff their
embeddings are the same, hence the distances are metrics.  If the
(weak) graph distance is zero, every edge needs to be mapped to
itself, hence the embeddings are the same.  If on the other hand, the
embeddings are the same, a graph mapping may map every edge to itself
in the embedding.  Since there are no intersections or overlapping
vertices, this mapping is continuous in the target graph, and the
distance is zero.
\end{proof}


Note that for non-plane graphs the (weak) graph distance does not
fulfill the identity of indiscernibles. For example, if $G_1$ consists
of two crossing line segment edges, and $G_2$ has visually the same
embedding but consists of four edges and includes the intersection
point as a vertex, then both,
$\distDir(G_1,G_2)=\distDirWeak(G_1,G_2)=0$ and
$\distDir(G_2,G_1)=\distDirWeak(G_2,G_1)=0$, and therefore
$\dist(G_1,G_2)=\distWeak(G_1,G_2)=0$.
Also note that we do not require graph mappings to be injective or
surjective.  And an optimal graph mapping from $G_1$ to $G_2$ may be
very different from an optimal graph mapping from $G_2$ to $G_1$. See
Figure~\ref{fig:examples} for examples of graphs and their graph
distances.

\begin{figure}[]
\centering
\includegraphics[scale=0.8]{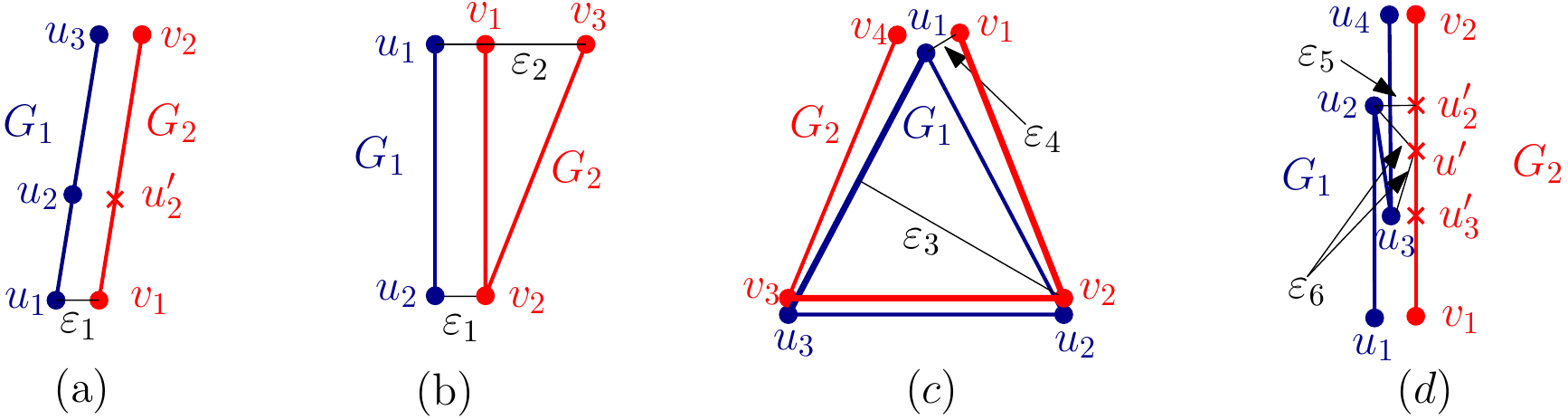}
\caption{Examples of graph mappings $s_1: G_1 \rightarrow G_2$ and
  $s_2: G_2 \rightarrow G_1$, and the resulting graph
  distances. Mapped vertices are drawn with crosses and are not graph
  vertices.\\
  (a) $\distDir(G_1,G_2)=\distDir(G_2,G_1)= \eps_1$.
  $s_1(u_1)=v_1, s_1(u_2)=u'_2, s_1(u_3)=v_2$ and $s_2=s^{-1}_1$.\\
  %
  (b) $\distDir(G_1,G_2)=\eps_1 < \eps_2=\distDir(G_2,G_1)$.
  The mapping $s_1(u_1)=v_1$ and $s_1(u_2)=v_2$ is not surjective, and
  $s_2(v_1)=s_2(v_3)=u_1$ and $s_2(v_2)=u_2$ is not injective.\\
  (c) $\distDir(G_1,G_2)=\eps_3 > \eps_4=\distDir(G_2,G_1)$.
  $s_1(u_i)=v_i$ and $s_2(v_i)=u_i$ for $i=1,2,3$; $s_2(v_4)=u_1$.\\
  (a)-(c) The weak graph distances equal the strong graph distances.\\
  (d) $\distDir(G_1,G_2)=\distDirWeak(G_1,G_2)=\distDirWeak(G_2,G_1)=\eps_5
  < \eps_6=\distDir(G_2,G_1)$. Here, the mappings that attain the
  strong graph distances are $s_1(u_1)=v_1, s_1(u_2)=u'_2,
  s_1(u_3)=u'_3, s_1(u_4)=v_2$ and $s_2(v_1)=u_1, s_2(v_2)=u_4$, where
  $s_2$ in the limit maps $u'$ to all points on the edge from $u_2$ to
  $u_3$. The mappings attaining the weak graph distances are
  $s^w_1=s_1$ and $s^w_2=s^{-1}_1$.  }
  \label{fig:examples}
\end{figure}

In the following, we
show that the traversal distance between a graph $G_1$ and a graph
$G_2$ is a lower bound for $\distDirWeak(G_1,G_2)$, which follows from
the observation that the traversal distance captures the combinatorial
structure of the graphs to a lesser extent than our graph distances.
Furthermore, we apply the graph distances to measure the similarity
between two polygonal paths to examine how these new definitions are
generalizations of the (weak) \fr for curves to graphs.

\subsection{Relation to Traversal Distance}\label{subsec:travdist}
A related distance measure for graphs was proposed by Alt et.~al.~\cite{alt2003262}.
They define the \emph{traversal distance} of two connected embedded graphs $G_1,G_2$ as
\begin{ceqn}
\begin{align*}
\delta_T(G_1,G_2)\ =\ \inf_{f,g}\ \max_{t\in [0,1]}\ ||f(t)-g(t)||\;,
\end{align*}
\end{ceqn}
where $f$ ranges over all traversals of $G_1$ and $g$ over all partial traversals of $G_2$.
A \emph{traversal} of $G_1$ is a continuous, surjective map $f\colon [0,1]\rightarrow G_1$,
and a \emph{partial traversal} of $G_2$ is a continuous map $g\colon [0,1]\rightarrow G_2$.

Thus, graphs $G_1,G_2$ have small traversal distance if there is a traversal of $G_1$ and
a partial traversal of $G_2$ that stay close together. This could also be used for comparing
a graph $G_1$ to a larger graph $G_2$. However, as we observe below,
the traversals do not require to maintain the combinatorial structure of $G_1$ within $G_2$.
First, we observe that our distance measures are stronger distances in the sense that
\begin{ceqn}
\begin{align*}
\delta_T(G_1,G_2)\ \leq\ \distDirWeak(G_1,G_2)\ \leq\ \distDir (G_1,G_2).
\end{align*}
\end{ceqn}

This follows because a graph mapping that realizes $\distDir (G_1,G_2)
\leq \eps$ maps any traversal of $G_1$ to a partial traversal of $G_2$
with distance at most $\eps$. For the weak graph distance, the
traversal might need to be adjusted, so that it moves back along an
already traversed path where the weak Fr\'echet matching requires
it. Note that a traversal need not be injective.

However, the traversal distance captures the combinatorial structure
of the graphs to a lesser extent than our measures.
Figure~\ref{fig:examples}~(c) shows two graphs that have large graph
distance (in particular the directed distance from $G_1$ to $G_2$ is
large) but small traversal distance. If indeed $G_1$ is a map
reconstruction and $G_2$ the ground truth we are comparing to, then
the distance from $G_1$ to $G_2$ should be large.

\subsection{Graph Distance for Paths}
Consider the simple case that the graphs are paths embedded as
polygonal curves.  In this case, the (weak) graph distance is closely
related to the (weak) \fr.  If the graphs are single edges embedded as
polygonal curves, the graph and curve distances are in fact equal
except for orientation of the curves.  If the graphs are paths, such
that each edge is embedded as a straight segment, the curve and graph
distances are related, but not identical, as we show next.

A graph mapping between two paths maps vertices from one path to
points on the other path, and it maps edges to the corresponding
subpaths. In this case, we can characterize the graph distance in the
\emph{free space}~\cite{alt1995computing}, the geometric structure
used for computing the \fr. Recall that for curves $f,g\colon [0,1]
\rightarrow \mathbb{R}^d$ the free space is defined as $ F_\eps(f,g) =
\{ (s,t)\, |\, d(f(s),g(t)) \leq \eps \}$, i.e., the subset of the
product of parameter spaces such that the corresponding points in the
image space have distance at most~$\eps$.

\begin{observation}
Let $P_1, P_2$ be two polygonal curves parameterized over $[0,m]$ and
$[0,n]$, respectively. A graph mapping realizing $\distDir(P_1,P_2)
\leq \eps$ can be characterized as an $x$-monotone path in $[0,m]
\times [0,n]$ from the left boundary to the right boundary that is
$y$-monotone (either increasing or decreasing) in each column of the
free space. A graph mapping realizing $\distDirWeak(P_1,P_2) \leq
\eps$ is characterized by a path in $[0,m] \times [0,n]$ from the left
boundary to the right boundary that is vertex-$x$-monotone, i.e., it
is monotone in the traversal of the vertices on the $x$-axis.
\end{observation}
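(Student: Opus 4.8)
The plan is to establish, for each of the two claims, a two-way correspondence between graph mappings $s\colon P_1\to P_2$ with $\max_{e}\delta_{F}(e,s(e))\le\eps$ (respectively $\max_e\delta_{wF}(e,s(e))\le\eps$) and the free-space paths described, and to verify that the correspondence preserves the bound $\eps$ in both directions. Throughout I parameterize $P_1$ over $[0,m]$ and $P_2$ over $[0,n]$ so that the vertices of $P_1$ sit at the integer abscissae $x=0,1,\dots,m$ and the vertical strip $[i-1,i]\times[0,n]$ (the $i$-th \emph{column}) is exactly the free-space cell of the edge $e_i=\{v_{i-1},v_i\}$ against $P_2$. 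The one structural fact I will use repeatedly is that, since $P_2$ is itself a path, a \emph{simple} path in $P_2$ is precisely a sub-interval of the parameterization $[0,n]$ traversed monotonically, and conversely a monotone sweep of such a sub-interval is simple. This is what converts the requirement ``$s(e_i)$ is a simple path'' into the condition ``$y$-monotone in the $i$-th column''.

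For the strong distance I first pass from a mapping to a path. Given $s$ with $\delta_F(e_i,s(e_i))\le\eps$ for every $i$, the Alt--Godau characterization of the \fr gives, inside the $i$-th cell, a monotone free-space path from $(i-1,s(v_{i-1}))$ to $(i,s(v_i))$; it is $x$-monotone from $i-1$ to $i$ and monotone in the $P_2$-parameter, and by the structural fact the latter means $y$-increasing or $y$-decreasing throughout the column according to the orientation in which $s(e_i)$ traverses $P_2$. Since $s(v_i)$ is a single well-defined point shared by the images of $e_i$ and $e_{i+1}$, the per-cell paths concatenate continuously into one $x$-monotone curve from $(0,s(v_0))$ on the left boundary to $(m,s(v_m))$ on the right boundary that is $y$-monotone in every column. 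Its endpoints lie on the left and right boundaries at \emph{arbitrary} heights precisely because a graph mapping may send $v_0$ and $v_m$ to interior points of edges of $P_2$ rather than to its endpoints. For the reverse direction I read off $s(v_i)$ as the height at which such a path crosses $x=i$; the $y$-monotonicity inside each column guarantees (again by the structural fact) that the induced image $s(e_i)$ is a simple sub-path of $P_2$, and the restriction of the path to that column is a monotone matching witnessing $\delta_F(e_i,s(e_i))\le\eps$.

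The weak case is handled by the same two-way construction, with the monotone free-space characterization replaced by the weak one: $\delta_{wF}(f,g)\le\eps$ holds iff the two relevant corners lie in the same connected component of the free space, i.e. iff they are joined by a (not necessarily monotone) curve inside it. Consequently the per-cell curve for $e_i$ need no longer be monotone in either coordinate, so within a column both the $x$- and the $y$-coordinate may backtrack; what survives is only that the columns, hence the vertices of $P_1$, are entered in the order $0,1,\dots,m$ as the edges $e_1,\dots,e_m$ are traversed in turn. This is exactly vertex-$x$-monotonicity. The reverse direction again reads $s(v_i)$ off the crossings of $x=i$ and applies the weak characterization cell by cell.

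The step I expect to be delicate is the weak reverse direction, where I must certify that a vertex-$x$-monotone path yields a \emph{legal} graph mapping, i.e. that each $s(e_i)$ is a simple path. Here I will use that a continuous curve confined to a single column sweeps a contiguous $y$-interval of $[0,n]$, and that the endpoint-preserving, onto reparameterizations defining the \wfr force the relevant sub-path to be the one between $s(v_{i-1})$ and $s(v_i)$; the earlier observation that lengthening $s(e)$ to a non-simple path cannot decrease $\delta_{wF}$ then lets me replace any redundant back-and-forth by the simple sub-path without violating the bound. Making this replacement precise, and checking that it meshes with continuity at the shared vertices $s(v_i)$, is the only point that requires care; the strong case and both forward directions are routine given the Alt--Godau machinery.
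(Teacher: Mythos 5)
Your argument is correct; note that the paper states this result only as an Observation, with no accompanying proof, so there is no written argument to diverge from. What you give---the per-column (Alt--Godau) free-space characterization of $\delta_F$ and $\delta_{wF}$ for each edge of $P_1$, concatenation of the column pieces at the images of the vertices, and the reduction of the weak case's simplicity requirement to the paper's earlier remark that replacing a non-simple image path by the simple path with the same endpoints cannot increase $\delta_{(w)F}$---is exactly the reasoning the paper leaves implicit, carried out in both directions as needed.
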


This observation implies relationships between graph distance and
(weak) \fr that are summarized in Lemma~\ref{lem-comparisonFrechet}.
In this lemma we use the non-standard variant of (weak) \fr that does
not require the homeomorphism to be orientation preserving, but allows
to choose an orientation.
\footnote{For the \wfr one can drop the requirement
that the reparameterizations $\alpha,\beta$ keep the endpoints fixed,
also called \emph{boundary restriction}~\cite{bbkrw-wtd-07}.}
Our graph distance does this naturally by
choosing where to map. This variant of the \fr for curves can be
computed by running the standard algorithm twice, i.e. searching for a
path from bottom-left to top-right corner, as well as from top-left to
bottom-right corner in the free space. Alternatively, we could enforce
an orientation for the graph distance, e.g., using directions on the
graphs.

\begin{lemma}\label{lem-comparisonFrechet}
Let $P_1, P_2$ be paths embedded as polygonal curves. Then
\begin{ceqn}
\begin{align*}
\delta_{wF*}(P_1, P_2) \leq \distWeak(P_1,P_2) \leq \dist(P_1,P_2) \leq \delta_{F}(P_1,P_2),
\end{align*}
\end{ceqn}
where $\delta_{wF*}$ denotes the weak \fr without boundary restriction.

If $P_1, P_2$ are single edges embedded as polygonal paths, equality
holds for both the weak and the strong distances.
\end{lemma}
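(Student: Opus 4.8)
The plan is to establish the three inequalities of the chain separately and then treat the single-edge case, since each inequality isolates a different feature (weak-versus-strong matching, orientation, and boundary freedom). The middle inequality $\distWeak(P_1,P_2)\le\dist(P_1,P_2)$ is immediate: for every edge $e$ and its image $s(e)$ one has $\delta_{wF}(e,s(e))\le\delta_F(e,s(e))$, so any graph mapping yields a no-larger weak objective; taking the infimum over mappings gives $\distDirWeak\le\distDir$ in each direction, and taking the maximum over the two directions gives the claim. This is exactly the monotonicity already invoked for the displayed chain $\delta_T\le\distDirWeak\le\distDir$.

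For the rightmost inequality $\dist(P_1,P_2)\le\delta_F(P_1,P_2)$ I would convert an optimal curve matching into a graph mapping. Fix $\eps=\delta_F(P_1,P_2)$ and a reparametrization $\sigma$ realizing it (choosing the favorable orientation). Define $s\colon P_1\to P_2$ by sending each vertex $v$ of $P_1$ at parameter $p_v$ to $P_2(\sigma(p_v))$ and each edge of $P_1$ to the sub-curve of $P_2$ swept out by $\sigma$ over the corresponding parameter interval. Since $\sigma$ is monotone, each such sub-curve is a monotone traversal of $P_2$, hence simple (or, by the earlier observation that a non-simple image may be replaced by a simple sub-path without increasing the \fr, may be taken simple), and the restriction of $\sigma$ certifies $\delta_F(e,s(e))\le\eps$ for each edge. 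Thus $\distDir(P_1,P_2)\le\eps$, and because $\delta_F$ is symmetric the same construction in the reverse direction gives $\distDir(P_2,P_1)\le\eps$; the maximum of the two yields $\dist(P_1,P_2)\le\delta_F(P_1,P_2)$.

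The leftmost inequality $\delta_{wF*}(P_1,P_2)\le\distWeak(P_1,P_2)$ is the most interesting, and here I would argue in the \fsd. Set $\eps=\distWeak(P_1,P_2)$. By the observation characterizing graph mappings, $\distDirWeak(P_1,P_2)\le\eps$ yields a path $\pi_1$ in $F_\eps(P_1,P_2)$ joining the left and right boundaries of $[0,m]\times[0,n]$, while $\distDirWeak(P_2,P_1)\le\eps$ yields, after transposing $F_\eps(P_2,P_1)=F_\eps(P_1,P_2)^{T}$, a path $\pi_2$ in $F_\eps(P_1,P_2)$ joining the bottom and top boundaries. The key step is a standard crossing lemma for a rectangle: any path connecting the left and right sides must meet any path connecting the bottom and top sides, so $\pi_1\cap\pi_2\ne\emptyset$. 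Their union is therefore a connected subset of the free space whose projection onto the $P_1$-axis is all of $[0,m]$ (from $\pi_1$) and onto the $P_2$-axis is all of $[0,n]$ (from $\pi_2$). Concatenating $\pi_1$ and $\pi_2$ at a common point produces a single curve in $F_\eps$ whose two coordinate functions are continuous and onto; after rescaling these give reparametrizations $\alpha,\beta$ that need not fix endpoints, which is precisely a witness for $\delta_{wF*}(P_1,P_2)\le\eps$.

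Finally, for single edges I would specialize the same observation: with no interior vertices the vertex-monotonicity constraint (weak case) and the per-column monotonicity constraint (strong case) become vacuous or collapse to a single monotone arc, so each directed graph distance reduces to a connectivity condition on $F_\eps$ across the parameter square and the three inequalities tighten to equalities. The step I expect to be the main obstacle is reconciling this with the strong curve distance $\delta_F$: a graph mapping of a single edge may send the two endpoints of $P_1$ to interior points of $P_2$, so the corresponding free-space path runs from the left boundary to the right boundary rather than from corner to corner as $\delta_F$ requires. Making equality with $\delta_F$ go through therefore requires arguing—up to the orientation choice noted before the lemma—that for a single edge this boundary freedom cannot lower the distance, and this endpoint-versus-corner reconciliation is the delicate point of the single-edge case.
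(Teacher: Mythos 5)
Your three inequalities are proved correctly and by essentially the same route as the paper: the middle one is the pointwise bound $\delta_{wF}(e,s(e))\leq\delta_F(e,s(e))$, the last one converts a monotone corner-to-corner Fr\'echet matching into graph mappings in both directions (the paper phrases this as the monotone free-space path realizing both directed distances), and the first one is precisely the paper's combination argument, for which you supply the two details the paper leaves implicit --- that the left-to-right path for $\distDirWeak(P_1,P_2)$ and the transposed bottom-to-top path for $\distDirWeak(P_2,P_1)$ must cross, and that the combined curve has onto coordinate projections. That part of your write-up is, if anything, more careful than the paper's.

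The single-edge claim is where you have a genuine gap, and you say so yourself: you reduce it to showing that ``boundary freedom cannot lower the distance'' for a single edge, and you leave that step open. This step cannot be filled as stated, because under Definition~\ref{def:graphMapping} the boundary freedom \emph{can} lower the distance. Take $P_1$ to be the single straight edge from $(0,0)$ to $(10,0)$, and let $P_2$ be a single edge embedded as a simple polyline hugging the $x$-axis (offset vertically by an arbitrarily small $\delta>0$ to keep it simple) that visits the $x$-coordinates $3\to 0\to 10\to 7$ in this order. Since vertices may map to interior points, $P_1$ can be mapped onto the middle leg of $P_2$, so $\distDir(P_1,P_2)=O(\delta)$; in the other direction the optimal image of $P_2$ is the subsegment $[3/2,\,17/2]$ of $P_1$, so $\distDir(P_2,P_1)=3/2+O(\delta)$, and hence $\dist(P_1,P_2)\approx 3/2$. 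But $\delta_F(P_1,P_2)=3$, even with free choice of orientation, because the curves' endpoints are $3$ apart under the better orientation (and $7$ under the other). The same example gives $\distWeak(P_1,P_2)\approx 3/2$, strictly between $\delta_{wF*}(P_1,P_2)=O(\delta)$ and the boundary-restricted $\delta_{wF}(P_1,P_2)=3$, so the weak equality fails in either reading as well. Equality for single edges therefore requires graph mappings to send endpoints to endpoints; this is exactly what the paper's one-line proof (``this is essentially the same as a parameterization \dots\ now with boundary restriction'') tacitly assumes, and exactly what your flagged obstacle shows cannot be taken for granted. So your proposal does not establish the lemma's final sentence, and no argument of the kind you sketch --- free-space collapse plus ``boundary freedom is harmless'' --- can; one must either build endpoint preservation into the mapping for this case or qualify the equality claim accordingly.
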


\begin{proof}
The last inequality holds because a path in the free space realizing
the \fr\ is a monotone path in $x$ and $y$ from the lower left to
upper right corner, hence also realizes both undirected graph
distances.  For the first inequality, observe that two paths in the
free space realizing the two directed weak graph distances can be
combined to a path from the left to the right boundary realizing the
\wfr without boundary restriction.

If $P_1,P_2$ are single edges embedded as polygonal paths, this is
essentially the same as a parameterization, hence both distances are
equal (now with boundary restriction).
\end{proof}
Note that if we require mapping endpoints to themselves then the weak
graph distance is also lower bounded by the \wfr with boundary
restriction.  A 1D-example of two paths where $\delta_{wF}$ is
strictly smaller than $\distWeak$ when enforcing to map endpoints is
the following: $P_1= (0,2,0,2)$ and $P_2= (0,1,2)$. Here,
$\delta_{wF}(P_1, P_2) = 0 < 1 = \distWeak(P_1, P_2)$.

Intuitively, and confirmed by the above lemma, our graph distance
measures are at least as hard to compute as the \fr.  It is known that
the \fr\ cannot be computed in less than subquadratic time unless the
strong exponential time hypothesis
fail~\cite{bringmann2014walking}. Hence we do not expect to compute
our graph distance measures more efficiently than quadratic time.

\section{Algorithms and Hardness for Embedded Graphs}\label{sec:alg}
Throughout this paper, let $G_1=(V_1,E_1)$ and $G_2=(V_2,E_2)$ be two
straight-line embedded graphs, and let $n_1 = |V_1|$, $m_1 = |E_1|$,
$n_2 = |V_2|$ and $m_2 = |E_2|$.

First, we consider the decision variants for the different graph
distances defined in \defref{def:distances}. Given $G_1$ and $G_2$ and
a value $\eps>0$, the decision problem for the graph distances is to
determine whether $\distDir(G_1,G_2)\leq\eps$
(resp., $\distDirWeak(G_1,G_2)\leq\eps$). Equivalently, this amounts
to determining whether there exists a graph mapping from $G_1$ to
$G_2$ realizing $\distDir(G_1,G_2)\leq\eps$ (resp.,
$\distDirWeak(G_1,G_2)\leq\eps$).  Note that the undirected distances
can be decided by answering two directed distance decision
problems. As we show in \secref{sec:opt}, the value of $\eps$ can be
optimized by parametric search.

In \secref{sec:algorithm} we describe a general algorithmic approach
for solving the decision problems by computing {\em valid
  $\eps$-placements} for vertices.  We show that for general embedded
graphs the decision problems for the strong and weak directed graph
distances are NP-hard, see \secref{sec:NPC-general}.  However, we
prove in \secref{sec:algosCases} that our algorithmic approach yields
polynomial-time algorithms for the strong graph distance if $G_1$ is a
tree, and for the weak graph distance if $G_1$ is a tree or if both
are plane graphs. In the latter scenario ($G_1$ and $G_2$ plane
graphs), deciding if $\distDir(G_1, G_2) \leq \eps$ remains NP-hard,
see \secref{sec:NPC-plane}.

\subsection{Algorithmic Approach}
\label{sec:algorithm}
Recall, that a (directional) graph mapping that realizes a given distance $\eps$
maps each vertex of $G_1$ to a point in $G_2$ and each edge of $G_1$
to a simple path in $G_2$ within this distance.  In order to determine
whether such a graph mapping exists, we define the notion of {\em
  $\eps$-placements} of vertices and edges; see
Figures~\ref{fig:edge-placement} and~\ref{fig:valid-placement}~(a).

\begin{definition}[$\eps$-Placement]
An \emph{$\eps$-placement} of a vertex $v$ is a maximally connected
part of $G_2$ restricted to the $\eps$-ball $B_\eps(v)$ around $v$.
An \emph{$\eps$-placement} of an edge $e=\{u,v\} \in E_1$ is a path
$P$ in $G_2$ connecting placements of $u$ and $v$ such that
$\delta_F(e,P)\leq \eps$.
In that case, we say that $C_u$ and $C_v$ are {\em reachable} from
each other.  An \emph{$\eps$-placement} of $G_1$ is a graph mapping
$s\colon G_1\rightarrow G_2$ such that $s$ maps each edge $e$ of $G_1$
to an $\eps$-placement.

A \emph{weak $\eps$-placement} of an edge $e=\{u,v\}$ is a path $P$ in
$G_2$ connecting placements of $u$ and $v$ such that
$\delta_{wF}(e,P)\leq \eps$.  A \emph{weak $\eps$-placement} of $G_1$
is a graph mapping $s\colon G_1\rightarrow G_2$ such that $s$ maps
each edge $e$ of $G_1$ to a weak $\eps$-placement.
\label{def:eps-placements}
\end{definition}

\begin{figure}[b]
	\subfloat[An $\eps$-placement of $e$.]{
		\includegraphics[page=6,scale=0.6]{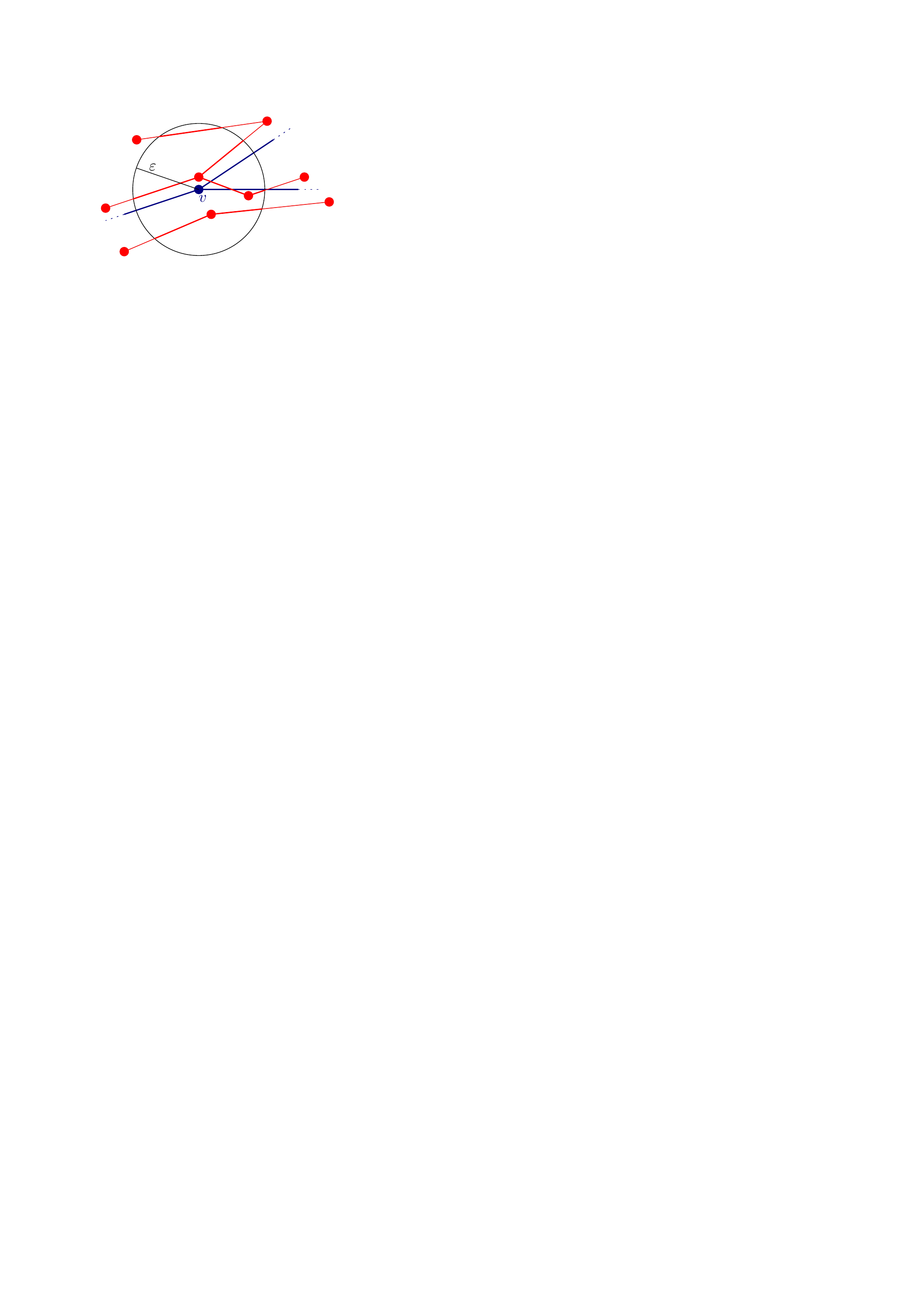}
		\label{subfig:edge-placement}}
	\hfill
	\subfloat[Not an $\eps$-placement.]{
		\includegraphics[page=7,scale=0.6]{placements}
		\label{subfig:not-edge-placement}}
	\hfill
	\subfloat[A weak $\eps$-placement.]{
		\includegraphics[page=8,scale=0.6]{placements}
		\label{subfig:weak-edge-placement}}
	\caption{(a) Illustration of $\eps$-placements of an edge $e$. (b) Not an $\eps$-placement because the path leaves the $\eps$-tube around $e$. (c) The \fr is too large, but $e$ can be mapped to the path if backtracking is allowed. Thus, it is a weak $\eps$-placement.}
	\label{fig:edge-placement}
\end{figure}

\begin{figure}
\centering
\includegraphics[page=1,scale=0.7]{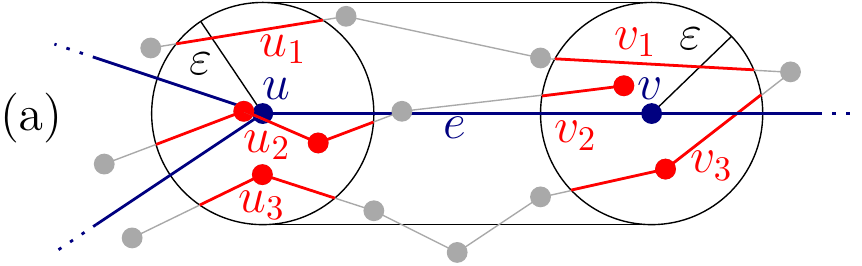}
\hspace{1cm}
\includegraphics[page=4,scale=0.7]{valid-placements}
\caption{Illustration of valid and invalid vertex placements. (a)
  Placements $u_3$ (resp.~$v_3$) are invalid because they are not
  connected to a placement of $v$ (resp.~$u$) by an $\eps$-placement
  of the edge $e$. Placement $v_2$ is valid when considering $e$ in
  isolation, but it cannot connect to a placement for the edge that
  leaves $v$ to the right. Thus, it is also invalid. As a result of
  pruning $v_2$ (right), $u_2$ becomes invalid as well, leaving only
  $u_1$ and $v_1$ as potentially valid placements of $u$ and $v$ (b).}
\label{fig:valid-placement}
\end{figure}

Note that an $\eps$-placement of a vertex $v$ consists of edges and
portions of edges of $G_2$, depending whether $B_\eps(v)$ contains
both, one or zero endpoint(s) of the edge, see
\figref{fig:valid-placement}.
Also note that each vertex has $O(m_2)$ $\eps$-placements, 
since an $\eps$-placement is defined as a connected part of $G_2$ of maximal size inside $B_\eps(v)$.
Furthermore, we consider two graph
mappings $s_1$ and $s_2$ from $G_1$ to $G_2$ to be equivalent in terms
of the directed (weak) graph distance if for each vertex $v \in V_1$,
$s_1(v)$ and $s_2(v)$ are points on the same $\eps$-placement of $v$.

\paragraph*{General Decision Algorithm.}
Our algorithm consists of the following four steps, which we describe
in more detail below. We assume $\eps$ is fixed and use the term
\emph{placement} for an $\eps$-placement.

Observe that each connected component of $G_1$ needs to be mapped to a
connected component of $G_2$, and each connected component of $G_1$
can be mapped independently of the other components of $G_1$. Hence we
can first determine the connected components of both graphs, and then
consider mappings between connected components only. In the following
we present an algorithm for determining if a mapping from $G_1$ to
$G_2$, that realizes a given distance $\eps$, exists, where both $G_1$
and $G_2$ are connected graphs.
\begin{algorithm}
  \caption{General Decision Algorithm}
  \begin{algorithmic}[1]
    \State Compute vertex placements.
    \State Compute reachability information for vertex placements.
    \State Prune invalid placements.
    \State Decide if there exists a placement for the whole graph $G_1$.
  \end{algorithmic}
  \label{alg:general}
\end{algorithm}

\paragraph*{1. Compute vertex placements.}
We iterate over all vertices $v\in V_1$ and compute all their
placements.  Each vertex has $O(m_2)$ placements, so the total number
of vertex-placements is $O(n_1 \cdot m_2)$, and they can be computed
in $O(n_1 \cdot m_2)$ time using standard algorithms for computing
connected components.

\paragraph*{2. Compute reachability information of vertex placements.}
Next, we iterate over all edges $e=\{u,v\}\in E_1$ to determine all
placements of its vertices that allow a placement of the edge.  That
is, we search for all pairs of vertex-placements $C_u,C_v$ that are
reachable from each other according to Definition
\ref{def:eps-placements}.

For the weak graph distance, we need to find all pairs of placements
of $u$ and placements of $v$ that can reach one another using paths
contained in the $\eps$-tube $T_\eps(e)$ around $e$, i.e., the set of
all points with distance $\leq\eps$ to a point on $e$, see
\figref{fig:edge-placement}~(c).  If we restrict $G_2$ to its
intersection with the $\eps$-tube, all placements in the same
connected component are mutually reachable.  Thus, each edge is
processed in time linear in the size of $G_2$ using linear space per
edge: For each connected component a pair of lists containing the
placements of $u$ and $v$ in that component, respectively, is
computed.  So, all reachability information can be computed in $O(m_1
\cdot m_2)$ time and space.
Note that the weak \fr between a straight line edge $e \in E_1$ and a
simple path $s(e)$ in $G_2$ is the maximum of the Hausdorff distance
between $e$ and $s(e)$ and the distances of the endpoints of $e$ and
$s(e)$.

For the strong graph distance, existence of a path inside the
$\eps$-tube is not sufficient to describe the connectivity between
placements.  We must ensure that the Fr\'echet distance between $e$
and $P$ is at most $\eps$, i.e.,\ a continuous and monotone map $s$
must exist from $e$ to $P$ such that $\delta_F(t,s(t))\leq\eps$ for
all $t\in e$.  This can be decided in $O(|P|)$ time using the original
dynamic programming algorithm for computing the Fr\'echet
distance~\cite{alt1995computing}.  In order to determine whether such
a path $P$ exists, every placement of $u$ stores a list of all
placements of $v$ that are reachable.  The connectivity information
can be computed by running a graph exploration, starting from each
placement, which prunes a branch if the search leaves the $\eps$-tube
or backtracking on $e$ is required to map it.  This method runs a
search for every placement of the start vertex and thus needs
$O(m_2^2)$ time per edge of $G_1$.  Since the connectivity is
explicitly stored as pairs of placements that are mutually reachable,
it also needs $O(m_2^2)$ space per edge. Hence, in total over all
edges, $O(m_1 \cdot m_2^2)$ time and space are needed.  Summing up, we
have:
\begin{lemma}
To run step~1 and step~2 of \algoref{alg:general}, we need $O(m_1
\cdot m_2)$ time and space for the weak graph distance and $O(m_1
\cdot m_2^2)$ time and space for the strong graph distance.
\label{lem:runtime_reachability}
\end{lemma}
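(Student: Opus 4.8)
The plan is to aggregate the per-step cost bounds already derived in the discussion preceding the lemma, being careful to reconcile the $n_1$ factor that appears in step~1 with the $m_1$ factor stated in the lemma. Since the substantive algorithmic content has been established above, the proof is essentially a summation together with two correctness observations that justify the counts.

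First I would handle step~1. Each of the $n_1$ vertices of $G_1$ has $O(m_2)$ placements, each a maximal connected piece of $G_2$ inside an $\eps$-ball, so computing all of them amounts to intersecting $G_2$ with $n_1$ balls and running a connected-components pass, for $O(n_1 \cdot m_2)$ time and space. Because we assume $G_1$ connected, $m_1 \geq n_1 - 1$, hence $n_1 = O(m_1)$ and the step~1 cost becomes $O(m_1 \cdot m_2)$ in both the weak and the strong setting; this term is dominated by (or equals) the step~2 term in each case.

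Next I would bound step~2 for each distance separately. For the weak distance, I would argue that for each edge $e=\{u,v\}$ it suffices to restrict $G_2$ to its intersection with the $\eps$-tube $T_\eps(e)$ and compute connected components there: by the characterization that the weak \fr between a segment and a simple path equals the maximum of their Hausdorff distance and the endpoint distances, two placements are mutually reachable exactly when they lie in the same component of this restriction. A single connected-components pass per edge is linear in the size of $G_2$, giving $O(m_2)$ time and space per edge and $O(m_1 \cdot m_2)$ in total. For the strong distance, the extra monotonicity requirement forces us to test candidate paths with the dynamic-programming \fr routine; here I would show that running one tube-confined exploration per placement of the start vertex (pruning whenever the search exits $T_\eps(e)$ or would require backtracking along $e$) discovers exactly the reachable placements, and that with $O(m_2)$ start placements at $O(m_2)$ cost apiece this is $O(m_2^2)$ time per edge. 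Explicitly storing the reachable pairs costs $O(m_2^2)$ space per edge, so summing over the $m_1$ edges gives $O(m_1 \cdot m_2^2)$ time and space; since $m_1 m_2 = O(m_1 m_2^2)$, adding step~1 leaves the bound unchanged.

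The arithmetic is routine bookkeeping, and I expect the only points that genuinely require care to be the two correctness claims underlying the counts: that same-component-in-the-tube coincides with mutual weak reachability, and that the per-placement exploration with the stated pruning rule finds all and only the strongly reachable placements. These I would treat as the real content, appealing for the former to the characterization of the weak \fr as the maximum of Hausdorff and endpoint distances, and for the latter to the standard free-space monotonicity argument; once granted, the time and space totals follow immediately by summation.
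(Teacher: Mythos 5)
Your proposal is correct and follows essentially the same argument as the paper, which establishes this lemma through the discussion preceding it: counting $O(m_2)$ placements per vertex for step~1, tube-restricted connected components (justified by the Hausdorff-plus-endpoints characterization of the weak \fr) for the weak case, and per-placement explorations with pruning plus the dynamic-programming \fr check for the strong case. Your explicit reconciliation of $n_1$ versus $m_1$ via connectivity of $G_1$ is a small point the paper leaves implicit, but it is not a different approach.
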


\paragraph*{3. Prune invalid placements.}
Now, after having processed all vertices and edges,
it still needs to be decided whether $G_1$ as a whole can be
mapped to $G_2$.  To this end, we delete \emph{invalid} placements of
vertices.

\begin{definition}[Valid Placement]
An $\eps$-placement $C_v$ of a vertex $v$ is \emph{(weakly) valid} if
for every neighbor $u$ of $v$ there exists an $\eps$-placement $C_u$
of $u$ such that $C_v$ and $C_u$ are connected by a (weak)
$\eps$-placement of the edge $\{u,v\}$. Otherwise, $C_v$ is
\emph{(weakly) invalid}.
\end{definition}

See Figure~\ref{fig:valid-placement} for an illustration of (in)valid
placements. As shown in the Figure, deleting an invalid placement
possibly sets former valid placements to be invalid. Thus, we need to
process all placements recursively until all invalid placements are
deleted and no new invalid placements occur. Note that the ordering of
processing the placements does not affect the final result.
To decide which placements of vertices $u$ and $v$ incident to an
edge~$e$ are valid, we use the reachability information computed in
Step 2.

Initially there are $O(n_1 \cdot m_2)$ vertex-placements, each of
which may be deleted once.  For the weak graph distance, connectivity
is stored using connected components inside the $\eps$-tube
surrounding an edge $\{u,v\}$.  On deleting a placement $C_v$ of $v$,
it is removed from the list containing placements of $v$.  If a
component no longer contains placements of $v$ (i.e. its list becomes
empty), then all placements of $u$ in that component become invalid.
A placement $C_v$ is deleted at most once and upon deletion it must be
removed from one list for every edge incident to $v$.  Thus, the time
for pruning $C_v$ is $O(\deg(v))$.  Since the sum of all degrees is
$2m_1$, all invalid placements can be pruned in $O(m_1 \cdot m_2)$
time.
For the strong graph distance, every placement has a list of
placements to which it is connected.  On deleting $C_v$, it must be
removed from the lists of all placements $C_u$ to which $C_v$ is
connected.  Each vertex has $O(m_2)$ placements which have to be
removed from a list for each neighbor of $v$.  Thus, pruning a
placement runs in $O(\deg(v)\cdot m_2)$ time and pruning all invalid
placements in $O(m_1 \cdot m_2^2)$ time.

\begin{lemma}
Pruning all invalid placements takes $O(m_1 \cdot m_2)$ time for the
weak graph distance and $O(m_1 \cdot m_2^2)$ time for the strong graph
distance.
\label{lem:runtime_pruning}
\end{lemma}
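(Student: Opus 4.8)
The plan is to treat the pruning step as a monotone worklist (fixpoint) computation and bound its total cost by charging all work to the deletions of individual vertex-placements. Concretely, I would maintain a queue of placements newly discovered to be invalid; repeatedly pop a placement, delete it, and check whether its deletion strips some neighboring placement of its last supporting placement across an incident edge, enqueuing any such newly invalid placement. Before analyzing cost, I would first record the monotonicity that makes this correct: once a placement $C_v$ of $v$ loses every $\eps$-placement of a neighbor $u$ reachable from it, it can never regain support, so invalidity only spreads and each placement is deleted \emph{at most once}; together with the order-independence already noted in the text, this shows the final surviving set is well defined. Since each vertex has $O(m_2)$ placements and there are $n_1$ vertices, the total number of deletions is $O(n_1\cdot m_2)$, and the running time is the sum over these deletions of the per-deletion cost, so it remains only to bound that per-deletion cost in the two models.

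For the weak graph distance, reachability across an edge $e=\{u,v\}$ is encoded by the connected components of $G_2\cap T_\eps(e)$, and for each component I keep, for each endpoint, a list of the placements of that endpoint lying in the component together with a size counter. Deleting a placement $C_v$ then removes $C_v$ from one component-list per incident edge and decrements the corresponding counter; when a counter for an endpoint reaches zero, every placement of the opposite endpoint still held in that component has lost its support across $e$ and is enqueued. Detecting this is $O(1)$ per incident edge via the counter, so deleting $C_v$ costs $O(\deg(v))$, plus the cost of enqueuing the newly invalid placements, which I charge to their own (unique) later deletions. Summing $O(\deg(v))$ over the $O(m_2)$ placements of each $v$ and over all vertices gives $O(m_2)\cdot\sum_{v\in V_1}\deg(v)=O(m_1\cdot m_2)$.

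For the strong graph distance the reachability data must be stored explicitly: each placement of $u$ carries a list of the placements of $v$ reachable from it across $e$ (and symmetrically). Deleting $C_v$ now requires removing $C_v$ from the reachability list of every placement $C_u$ that points to it; across one incident edge there can be $\Theta(m_2)$ such placements $C_u$, so the removal costs $O(\deg(v)\cdot m_2)$, and whenever a removal empties some $C_u$'s reachability list for $e$, $C_u$ is enqueued as invalid. Summing $O(\deg(v)\cdot m_2)$ over the $O(m_2)$ placements of each $v$ and over all vertices yields $O(m_2^2)\cdot\sum_{v\in V_1}\deg(v)=O(m_1\cdot m_2^2)$.

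The main obstacle is exactly the cascading nature of the pruning: a single deletion can trigger further deletions, and a naive scheme that rescans a vertex's full neighborhood after each deletion would break both bounds. The crux is therefore the amortization, namely designing the supporting data structures—component-size counters in the weak case and back-pointers into the reachability lists in the strong case—so that each newly invalidated placement is detected in $O(1)$ time per triggering event and the cost of handling it is paid for by its unique eventual deletion. Once this is in place, the at-most-once deletion property combined with the two per-deletion cost bounds immediately yields the claimed totals of $O(m_1\cdot m_2)$ for the weak distance and $O(m_1\cdot m_2^2)$ for the strong distance.
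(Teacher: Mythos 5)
Your proposal is correct and follows essentially the same argument as the paper: amortize over the at-most-once deletion of each of the $O(n_1\cdot m_2)$ placements, with per-deletion cost $O(\deg(v))$ in the weak case (via the per-edge component lists) and $O(\deg(v)\cdot m_2)$ in the strong case (via explicit reachability lists), summing to $O(m_1\cdot m_2)$ and $O(m_1\cdot m_2^2)$ respectively. The worklist formulation, monotonicity remark, and size counters you add are implementation details that the paper leaves implicit, but they do not change the substance of the argument.
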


Note that after the pruning step all remaining vertex placements are
(weakly) valid. However, the existence of a (weakly) valid placement
for each vertex is not a sufficient criterion for $\distDir(G_1, G_2)$
($\distDirWeak(G_1, G_2)$) in general, see
\figref{fig:plane-valid-counterexample}.

\paragraph*{4. Decide if there exists a placement for the whole graph $G_1$.}
After pruning all invalid placements, we want to decide if the
remaining valid vertex-placements allow a placement of the whole graph
$G_1$.
The complexity of this step depends on the graph and the distance measure: 
for plane graphs we show that we 
can concatenate weakly valid placements of two adjacent faces
(\lemref{lem:gd-weak-plane}), whereas this is not possible for the
directed strong graph distance in this setting
(\thmref{theorem:NP_plane}) or for general graphs for both
distances
(\thmref{thm:NP_hard_general}).
Although deciding the directed (weak) graph distance is NP-hard for
general graphs, there are two settings which may occur after running
steps 1-3 of Algorithm~\ref{alg:general}, making step 4 of the
algorithm trivial.  Clearly $\distDir(G_1,G_2)>\eps$
($\distDirWeak(G_1,G_2)>\eps$) if there is a vertex that has no
(weakly) valid $\eps$-placement. Furthermore, we have the
following:

\begin{lemma}
  \label{lem:unique-valid-pl}
If, after running steps 1-3 of Algorithm \ref{alg:general}, each
internal vertex (degree at least two) has exactly one valid
$\eps$-placement (resp., weakly valid $\eps$-placement) and each
vertex of degree one has at least one valid $\eps$-placement (resp.,
weakly valid $\eps$-placement), then $G_1$ has an $\eps$-placement
(resp., weak $\eps$-placement). Thus, $\distDir(G_1,G_2) \leq \eps$
(resp., $\distDirWeak(G_1,G_2) \leq \eps$).
\end{lemma}

\lemref{lem:runtime_reachability}, \lemref{lem:runtime_pruning} and
\lemref{lem:unique-valid-pl} imply the following Theorem.

\begin{theorem}
\label{thm:unique-valid-pl}
If there is a vertex that has no valid $\eps$-placement or if each
vertex has exactly one valid $\eps$-placement after running steps~1-3
of Algorithm \ref{alg:general}, the directed strong graph distance can
be decided in $O(m_1 \cdot m_2^2)$ time and space.  Analogously, if
there is a vertex that has no weakly valid $\eps$-placement or if each
vertex has exactly one weakly valid $\eps$-placement after running
steps~1-3 of Algorithm \ref{alg:general}, the directed weak graph
distance can be decided in $O(m_1 \cdot m_2)$ time and space.
\end{theorem}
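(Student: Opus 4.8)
The plan is to obtain the theorem as a direct consequence of the three preceding lemmas, so the work splits into a correctness argument and a running-time accounting. For correctness I would first verify that in each of the two hypothesized situations Step~4 of \algoref{alg:general} can be answered without any further search. Consider first the case where some vertex $v \in V_1$ has no valid $\eps$-placement after Step~3. Here I would argue the soundness of the pruning procedure: in any graph mapping $s$ realizing $\distDir(G_1,G_2)\le\eps$ (resp.\ $\distDirWeak(G_1,G_2)\le\eps$), every vertex $w$ is mapped into some $\eps$-placement $C_w$, and every edge $\{u,w\}$ is mapped to an $\eps$-placement (resp.\ weak $\eps$-placement) of that edge joining $C_u$ and $C_w$. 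Hence each such $C_w$ satisfies the (weak) validity condition, its required neighbor-witnesses being exactly the mapping-images $C_u$, which by the inductive hypothesis have not themselves been pruned; an induction on the order in which placements are deleted then shows that no placement used by $s$ is ever removed. Consequently, if all placements of $v$ have been pruned, no such mapping exists, so $\distDir(G_1,G_2)>\eps$ (resp.\ $\distDirWeak(G_1,G_2)>\eps$) and the algorithm correctly answers ``no''.

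For the complementary case, where each vertex has exactly one valid (resp.\ weakly valid) placement, I would observe that this hypothesis implies the hypothesis of \lemref{lem:unique-valid-pl}: every internal vertex has exactly one valid placement, and every degree-one vertex has at least one. Applying \lemref{lem:unique-valid-pl} then yields an $\eps$-placement (resp.\ weak $\eps$-placement) of all of $G_1$, so $\distDir(G_1,G_2)\le\eps$ (resp.\ $\distDirWeak(G_1,G_2)\le\eps$) and the algorithm correctly answers ``yes''. In both cases the decision in Step~4 reduces to testing which of the two hypotheses holds, which can be read off from the surviving placements in time linear in their number, and is therefore dominated by the earlier steps.

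The running-time bound then follows by summing the costs of Steps~1--3. By \lemref{lem:runtime_reachability}, Steps~1 and~2 run in $O(m_1\cdot m_2)$ time and space for the weak distance and $O(m_1\cdot m_2^2)$ for the strong distance; by \lemref{lem:runtime_pruning}, Step~3 adds $O(m_1\cdot m_2)$ and $O(m_1\cdot m_2^2)$, respectively; and Step~4 is subsumed as noted above. The totals are therefore $O(m_1\cdot m_2)$ for the weak directed graph distance and $O(m_1\cdot m_2^2)$ for the strong directed graph distance, matching the claimed bounds.

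The main obstacle I expect is the soundness claim for pruning in the first case: one must be sure that the recursive deletion of invalid placements never discards a placement that participates in some optimal mapping. The cleanest way to handle this is the induction sketched above---showing that any placement used by a distance-$\le\eps$ mapping satisfies the validity condition at every deletion step and hence survives---rather than reasoning directly about the (possibly complicated) global structure of optimal mappings. Everything else is bookkeeping: the second case is immediate from \lemref{lem:unique-valid-pl}, and the time and space bounds are an additive combination of the two running-time lemmas.
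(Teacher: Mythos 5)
Your proposal is correct, and its logical skeleton---obtaining the theorem from \lemref{lem:runtime_reachability}, \lemref{lem:runtime_pruning}, and \lemref{lem:unique-valid-pl}---is exactly the assembly the paper announces in the sentence preceding the theorem. Where you differ is in which half receives the actual proof. The paper's proof block is devoted entirely to the positive case and is, in effect, the (otherwise absent) proof of \lemref{lem:unique-valid-pl}: it picks a point $s(v)$ on the unique valid placement $C_v$ of each internal vertex, takes for each edge $e=\{u,v\}$ an edge placement $P_e$ discovered in Step~2 (which must exist, else $C_u$ or $C_v$ would be invalid), and adapts $P_e$ by shortening it and/or concatenating a path inside $B_\eps(v)$ so that its endpoints become $s(u)$ and $s(v)$ without pushing the (weak) \fr above $\eps$; degree-one vertices are then treated separately. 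You instead invoke that lemma as a black box and spend your effort on the negative case, giving an inductive argument that pruning never deletes a placement used by a mapping realizing distance at most $\eps$---a claim the paper disposes of with a single ``Clearly $\distDir(G_1,G_2)>\eps$ \dots'' before the lemma, outside the proof. Both routes are legitimate and in fact complementary: your soundness induction makes rigorous precisely the step the paper glosses over, while the paper's construction supplies the content you delegate to the lemma. Just be aware that, since the paper proves \lemref{lem:unique-valid-pl} nowhere else, a fully self-contained treatment needs both pieces: your pruning induction plus the explicit construction of the graph mapping from the unique valid placements.
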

\begin{proof}
Map each internal vertex $v$ to a point $s(v)$ on 
of its unique (weakly) valid placement $C_v$. 
Consider an edge $e=\{u,v\}\in E_1$.  In the previous step, at least
one (weak) placement $P_e$ of $e$ was discovered that connects points
$p_0$ and $p_k$ on $C_u$ and $C_v$, respectively, since otherwise
$C_u$ and $C_v$ would be invalid.  If $p_k \neq s(v)$, $P_e$ can be
adapted by shortening it and/or concatenating a path on $C_v$
(i.e.\ inside $B_\eps(v)$) without causing its (weak) \fr to $e$ to
become $>\eps$.  Adapt $P_e$ to a path $P'_e$ that has endpoints
$p'_0=s(u)$ and $p'_k=s(v)$ and define $s(e)=P'_e$.  Now, $s$ is a
graph mapping from $G_1$ to $G_2$ and each edge is mapped to a path
with (weak) \fr at most $\eps$, so $\distDir(G_1,G_2)\leq\eps$ (or
$\distDirWeak(G_1,G_2)\leq\eps$). Recall, that each vertex $w$ of
degree one is either connected to an internal vertex $i$, or $G_1$
consists of only one edge $\{w,x\}$. The first case is already
covered, since the unique valid (weak) placement $C_i$ for $i$ is
reachable from any valid (weak) placement of $w$. The latter case
follows because every vertex placement for $w$ is valid, i.e., for
each placement $C_w$ for $w$ there is a (weakly) reachable placement
$C_x$ for $x$ and any combination of two reachable placements $C_w$
and $C_x$ yields a valid (weak) placement of $G_1$.
\end{proof}

\subsection{NP-Hardness for the General Case}
\label{sec:NPC-general}
Notwithstanding the special cases in \thmref{thm:unique-valid-pl}, deciding the (weak) graph distance is not tractable for general graphs.
\begin{theorem}
Deciding whether $\distDir(G_1,G_2) \leq\eps$ and deciding whether $\distDirWeak(G_1,G_2)\leq\eps$ for two graphs $G_1$ and $G_2$ embedded in $\mathbb{R}^2$ is NP-hard.
\label{thm:NP_hard_general}
\end{theorem}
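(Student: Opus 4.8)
The plan is to prove NP-hardness by a polynomial-time reduction from \textsc{3-Sat}, exploiting exactly the phenomenon that makes step~4 of Algorithm~\ref{alg:general} nontrivial: a single vertex of $G_1$ may admit several valid $\eps$-placements, and an edge of $G_1$ only constrains which \emph{combinations} of placements of its endpoints are simultaneously realizable. Thus choosing a global $\eps$-placement of $G_1$ is a constraint-satisfaction problem, and I would encode a Boolean formula into this combinatorial choice. Concretely, given a 3-CNF formula $\phi$ with variables $x_1,\dots,x_n$ and clauses $c_1,\dots,c_m$, I would build embedded graphs $G_1,G_2$ and fix a threshold $\eps$ (a small absolute constant determined by the gadget geometry) so that $\distDir(G_1,G_2)\le\eps$ if and only if $\phi$ is satisfiable, and likewise for $\distDirWeak$. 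Since the general definition places no planarity requirement and allows crossings between edges of $G_1$ and edges of $G_2$, I have full freedom in the embedding, which is what makes a direct \textsc{3-Sat} encoding possible.

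The heart of the construction is three families of gadgets. A \emph{variable gadget} is a vertex $v$ of $G_1$ whose ball $B_\eps(v)$ meets $G_2$ in exactly two connected components (two short, well-separated subpaths of $G_2$); by \defref{def:eps-placements}, $v$ then has exactly two placements, which I identify with the truth values of $x_i$. A \emph{wire} is a path of $G_1$ together with a corresponding corridor in $G_2$ whose $\eps$-tube, per the reachability computation in step~2, admits an edge-placement only between ``matching'' components of consecutive vertices; this copies, negates, and branches a truth value along the graph while keeping every vertex locally valid, so that the pruning of step~3 removes nothing. A \emph{clause gadget} collects the three literal wires of a clause $c_j$ at a small subgraph of $G_1$ that can be completed to an $\eps$-placement if and only if at least one incoming wire carries the value \emph{true}; geometrically this is arranged by offering, in $G_2$, a routing that is available precisely when one of the three incoming corridors is occupied on its ``true'' side. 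The graphs have size polynomial in $n+m$, and the construction is computable in polynomial time.

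For correctness I would argue both directions in the usual way. From a satisfying assignment, choose each variable gadget's placement accordingly, route all wires consistently, and complete each clause gadget through the literal that is true; this yields a graph mapping realizing distance $\le\eps$, so $\distDir(G_1,G_2)\le\eps$. Conversely, any $\eps$-placement of $G_1$ induces a consistent choice at every variable gadget (the wires force consistency, and the clause gadgets force each clause to have a true literal), giving a satisfying assignment. To handle both measures at once, I would design every corridor so that its admissible paths are (nearly) monotone tubes along the corresponding edge of $G_1$, so that backtracking confers no advantage and the weak and strong edge-placements coincide on all gadgets; then the same $G_1,G_2,\eps$ work verbatim for $\distDir$ and $\distDirWeak$, using $\delta_F(e,s(e))=\delta_{wF}(e,s(e))$ on every gadget edge.

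The main obstacle I anticipate is the clause gadget, and more precisely making it behave identically for the strong and the weak distance. The weak \fr\ permits the image path to move back along itself, which could create unintended ``satisfying'' routings that short-circuit the clause test; I must therefore shape the $\eps$-tube so that even with backtracking the only feasible completions are the three intended literal routes. A secondary subtlety is ensuring that the reduction's hardness genuinely lives in step~4 rather than being destroyed earlier: every vertex must retain its full set of intended valid placements after the recursive pruning of step~3, so the gadgets must be locally consistent (each placement reachable to \emph{some} placement of each neighbor) while being globally constrained only through the wire/clause interaction. Verifying these two properties — robustness of the clause gadget under backtracking, and survival of all intended placements through pruning — is where the bulk of the careful geometric argument will go.
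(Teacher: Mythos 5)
Your overall strategy is the right one, and it is in essence what the paper does: exploit that a vertex of $G_1$ may have several $\eps$-placements while each edge of $G_1$ only constrains which \emph{pairs} of endpoint placements are jointly realizable, and then encode a satisfiability problem into that choice. The paper, however, reduces from \emph{binary CSP} rather than \textsc{3-Sat}: each variable $x_i$ becomes a vertex $v_i$ of $G_1$ (vertices placed far apart, e.g.\ on a large circle), each value $d_{i,a}\in D_i$ becomes a vertex $u_{i,a}$ of $G_2$ inside $B_\eps(v_i)$ (so placements of $v_i$ are exactly the values of $x_i$), and each allowed value pair becomes a single edge of $G_2$. Because binary CSP already has precisely the vertex/edge structure of the placement problem, no wires and no clause gadgets are needed at all. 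Your route through \textsc{3-Sat} is viable in principle, but it forces you to build exactly the machinery whose absence makes the paper's reduction clean.

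The genuine gap is your clause gadget. In this formalism, whether an edge of $G_1$ admits an $\eps$-placement between two given vertex placements is a property of that edge, its two endpoint placements, and $G_2$ alone; it cannot depend on how other edges are currently mapped. The only coupling between different edges is through shared vertices of $G_1$ and the single placement each such vertex receives. So ``a routing that is available precisely when one of the three incoming corridors is occupied on its true side'' is not something you can arrange: there is no notion of occupancy that one edge's feasibility can read off another edge. To make the clause work you must mediate the ternary disjunction through a clause \emph{vertex} $w$ of $G_1$ whose placements enumerate the choice of satisfied literal --- say three placements $P_1,P_2,P_3$, where $P_k$ is reachable (via the edge of $G_1$ from $w$ to the $k$-th wire endpoint) only from that wire's \textsc{True} placement, but reachable from both placements of the other two wires. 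Writing this down is exactly the textbook encoding of \textsc{3-Sat} into binary CSP, at which point you have reproduced the paper's reduction with extra steps. A second, smaller gap is your treatment of the weak distance: asserting that corridors are ``nearly monotone tubes'' does not prevent a weak placement from traversing several edges of $G_2$ (backtracking makes such paths feasible) and thereby connecting placement pairs your constraints are meant to forbid. The paper fixes this concretely by subdividing every edge of $G_1$ with a midpoint vertex whose $\eps$-ball is at distance at least $2\eps$ from the endpoints' balls, forcing each half-edge to map into a single edge of $G_2$; your proposal needs an equally concrete mechanism here.
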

\begin{proof}
We show NP-hardness with a reduction from binary constraint
satisfaction problem (CSP), which is defined as follows:

\begin{problem}
\proc{Binary Constraint Satisfaction Problem (CSP)}\\
\noindent \textbf{Instance:} A set of variables
$X=\{x_1,\ldots,x_n\}$, variable domains \\ $D = \{D_1,\ldots,D_n\}$
and a set of constraints $C = \{C_1,\ldots,C_m\}$, where each
constraint $C$ has two variables $x_i$, $x_j$ and a relation $R_C
\subseteq D_i\times D_j$.\\
\noindent \textbf{Question:} Can each variable $x_i$ be assigned a
value $d_i\in D_i$ such that for each constraint $C$ on variables
$x_i$, $x_j$, their values $(d_i,d_j)$ satisfy $R_C$?
\end{problem}

Consider an instance $\langle X, D, C\rangle$.  Set $\eps=1$.  Every
variable $x_i$ is represented by a vertex $v_i$ in $G_1$ and for each
constraint $C_k$ on variables $x_i$, $x_j$, $G_1$ has an edge
$\{v_i,v_j\}$ in $G_1$.  We embed $G_1$ such that all adjacent
$\eps$-balls are separated by at least $2\eps$ and no $\eps$-tube of
an edge overlaps an $\eps$-ball that does not belong to one of the
edge's endpoints.  This can for example be realized by placing all
vertices on a sufficiently large circle.

Every value $d_{i,a}\in D_i$ is represented by a vertex $u_{i,a}$ in
$G_2$ that is inside the $\eps$-ball of $v_i$.  For each pair of
values $d_{i,a}\in D_i$, $d_{j,b}\in D_j$ allowed by a constraint on
$x_i$ and $x_j$, $G_2$ has an edge $\{u_{i,a},u_{j,b}\}$.  This way,
every vertex $u_{i,a}$ of $G_2$ defines exactly one $\eps$-placement
of the corresponding vertex $v_i$ in $G_1$.
Figure~\ref{fig:csp-reduction} illustrates the construction.
\begin{figure}[t]
\centering
\includegraphics[width=.5\textwidth]{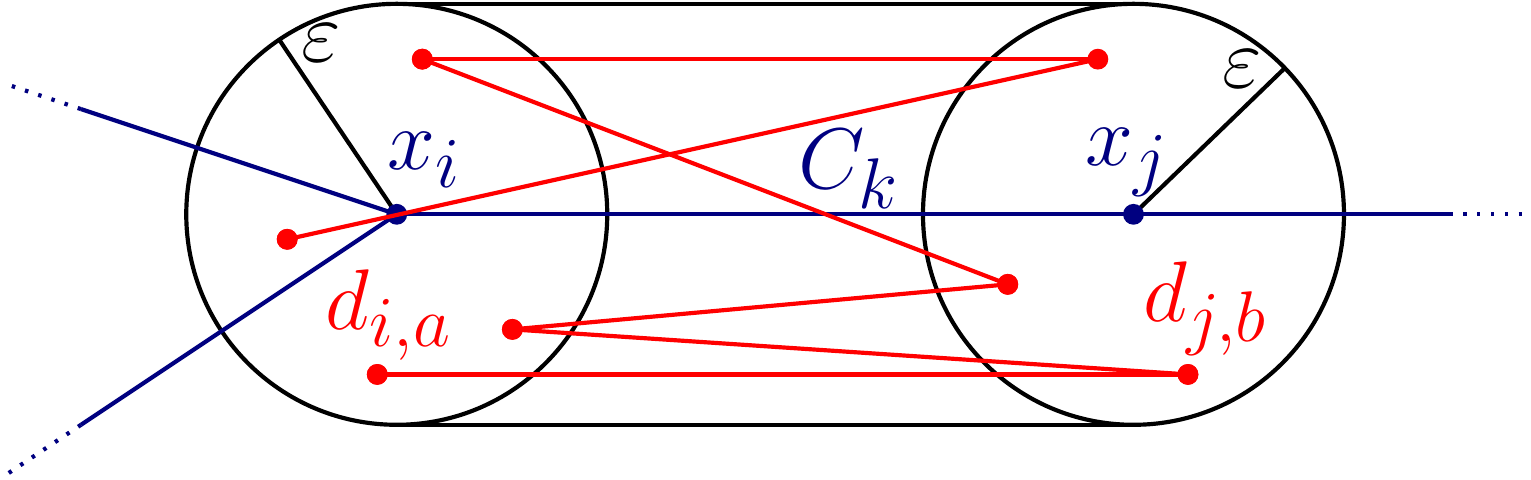}
\caption{Illustration of the reduction from BCSP.}
\label{fig:csp-reduction}
\end{figure}

A solution to the CSP consists of selecting a value $d_i\in D_i$ for
each variable $x_i$, such that if there is a constraint on $v_i$ and
$v_j$, the pair $\{d_i,d_j\}$ satisfies the constraint.  This is
equivalent to selecting a placement $d_i$ of each vertex $v_i$, such
that if $G_1$ has an edge $\{v_i,v_j\}$, then $G_2$ has an edge
connecting $u_i$, $u_j$ representing $d_i$ and $d_j$, respectively.
The graph distance problem has the weaker requirement that there
exists a path $P_{i,j}$ between $u_i$ and $u_j$, such that
$\delta_F(\{v_i,v_j\},P_{i,j})\leq\eps$.  However, the construction is
such that only paths consisting of a single edge are permitted for the
strong distance, since $\eps$-balls must be sufficiently separated and
nonoverlapping with $\eps$-tubes.  So $G_2$ must have an edge
$\{u_i,u_j\}$ if $G_1$ has an edge $\{v_i,v_j\}$.  So,
$\distDir(G_1,G_2)\leq\eps$ if and only if $\langle X,D,C \rangle$ is
a satisfiable binary CSP.

For the weak graph distance, edges in $G_1$ can be mapped to paths
consisting of multiple edges in $G_2$.  In this case, there may be
weak placements of $G_1$ that do not represent a solution to the
constraint satisfaction instance.  To remedy this, we insert a vertex
in the middle of each edge of $G_1$.  The vertex is placed such that
its $\eps$-ball is separated from the $\eps$-balls of the original
endpoints of the edge by at least $2\eps$, so each of the new edges is
mapped to part of a single edge in $G_2$.  In this construction
$\distDirWeak(G_1,G_2)\leq\eps$ if and only if $\langle X,D,C \rangle$
is a satisfiable binary CSP.
\end{proof}

\subsection{Efficient Algorithms for Plane Graphs and Trees}
\label{sec:algosCases}

Here, we show that that \algoref{alg:general} yields polynomial-time
algorithms for deciding the strong graph distance if $G_1$ is a tree
(\thmref{thm:gd-strong-dist}), and the weak graph distance if $G_1$ is
a tree or if both are plane graphs (\thmref{thm:gd-weak-dist}). More
precisely, we show that the existence of at least one (weakly) valid
placement for each vertex is a sufficient condition for
$\distDir(G_1,G_2)\leq\eps$ or $\distDirWeak(G_1,G_2)\leq\eps$.

\begin{lemma}
\label{lem:gd-map-tree}
If $G_1$ is a tree and every vertex of $G_1$ has at least one (weakly)
valid $\eps$-placement after running steps 1-3 of Algorithm
\ref{alg:general}, then $G_1$ has a (weak) $\eps$-placement. Thus,
$\distDir(G_1,G_2)\leq\eps$ (or $\distDirWeak(G_1,G_2)\leq\eps$).
\end{lemma}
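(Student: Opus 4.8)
The plan is to exploit the acyclicity of $G_1$ to build a consistent global placement greedily, never encountering the kind of conflict that makes the problem hard for general graphs. First I would root the tree $G_1$ at an arbitrary vertex $r$ and fix any valid $\eps$-placement $C_r$ for $r$, which exists by hypothesis. I would then process the vertices from the root outward (e.g.\ in breadth-first order): whenever a vertex $v$ has already been assigned a placement $C_v$ and $u$ is a child of $v$, the fact that $C_v$ is valid guarantees, by the definition of a valid placement, the existence of a placement $C_u$ of $u$ that is reachable from $C_v$ via an $\eps$-placement of the edge $\{u,v\}$; I assign this $C_u$ to $u$. Since the pruning step (step~3) removes all invalid placements, every placement surviving into step~4 is valid, so the chosen $C_u$ is itself valid and the construction can be continued into the subtree rooted at $u$.

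This is where the tree structure does the real work, and it is the step I expect to carry the weight of the argument. In a graph containing a cycle, a vertex can be reached along two distinct edges, and the placement forced by one neighbor need not agree with the placement forced by another when the cycle closes up, so a greedy assignment can fail to be globally consistent. In a tree, by contrast, every non-root vertex has a unique parent, so each vertex is assigned exactly one placement and is never revisited. Hence the map $v \mapsto C_v$ is well defined on all of $V_1$, and every tree edge $\{u,v\}$ (parent--child) comes equipped with an $\eps$-placement connecting the placements of its endpoints. There are no further constraints to satisfy precisely because the only edges of $G_1$ are these tree edges.

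It remains to turn this assignment of connected components into an actual graph mapping $s$, i.e.\ to fix points $s(v)\in C_v$ and paths $s(e)$. I would choose $s(v)$ to be any point of $C_v$, and then for each edge $e=\{u,v\}$ adapt the discovered edge placement exactly as in the proof of \thmref{thm:unique-valid-pl}: an $\eps$-placement $P_e$ connecting some $p_0\in C_u$ and $p_k\in C_v$ can be shortened and/or extended by a path inside $B_\eps(u)$ and $B_\eps(v)$ so that its endpoints become $s(u)$ and $s(v)$, without raising its (weak) \fr to $e$ above $\eps$. Because each $C_v$ lies inside $B_\eps(v)$, this adaptation is always possible, so the choices of endpoint on the two sides of an edge do not interfere. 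The resulting $s$ maps every edge of $G_1$ to an $\eps$-placement, which is exactly a (weak) $\eps$-placement of $G_1$, and hence $\distDir(G_1,G_2)\le\eps$ (resp.\ $\distDirWeak(G_1,G_2)\le\eps$). The weak case is verbatim the same argument after replacing $\eps$-placements of edges by weak $\eps$-placements and the \fr by the weak \fr throughout.
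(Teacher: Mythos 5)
Your proof is correct and follows essentially the same argument as the paper: root the tree, map the root to any valid placement, and greedily assign each child a placement reachable from its parent's, relying on the fact that every edge is a parent--child edge. Your two additions---noting explicitly that the post-pruning fixed point guarantees the chosen child placement is itself valid, and spelling out how to convert the component assignment into actual points and paths via the adaptation from Theorem~\ref{thm:unique-valid-pl}---are details the paper leaves implicit, and they only strengthen the write-up.
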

\begin{proof} 
We view $G_1$ as a rooted tree, selecting an arbitrary vertex as the
root.  We map all vertices of $G_1$ from the root outwards.  First,
map the root to an arbitrary (weakly) valid placement.  When
processing a vertex $v$: Map $v$ to an arbitrary (weakly) valid
placement that is reachable from the placement its parent $p$ is
mapped to.  Recall, that a (weak) $\eps$-placement of a vertex is
\emph{(weakly) valid} if there is an $\eps$-placement for every
incident edge. Since $p$ was mapped to a (weakly) valid placement and
there is an edge $\{p,v\}$ in $G_1$, there must be at least one such
placement of $v$ by definition of a (weakly) valid placement.  Since
all edges in $G_1$ are tree edges, this ensures that every edge is
mapped correctly, that is to a path with (weak) \fr\ at most $\eps$.
\end{proof}


\begin{lemma}
\label{lem:gd-weak-plane}
If $G_1$ and $G_2$ are plane graphs and every vertex of $G_1$ has at
least one weakly valid $\eps$-placement after running steps 1-3 of
Algorithm \ref{alg:general}, then $G_1$ has a weak
$\eps$-placement. Thus, $\distDirWeak(G_1,G_2)\leq\eps$.
\end{lemma}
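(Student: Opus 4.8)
The plan is to build a weak $\eps$-placement of $G_1$ incrementally by processing the plane graph $G_1$ face by face, reducing the only genuinely global step to a cycle-closing argument that exploits the crossing-freeness of $G_2$. First I would dispose of the easy combinatorial structure. If $G_1$ is not $2$-connected, I split it into its blocks (maximal $2$-connected subgraphs and bridges); the block--cut tree is a tree, so once each block carries a consistent weak placement, the blocks can be glued at the cut vertices exactly as in the proof of \lemref{lem:gd-map-tree}, using that a cut vertex has a valid placement reachable within each incident block. Hence it suffices to treat a single $2$-connected block, in which every face is bounded by a simple cycle and the face boundaries together cover every edge of $G_1$.

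Next I would set up the induction on faces. Order the faces $f_1,\dots,f_r$ of the block (including the outer face) along a spanning tree of the dual, so that each prefix $H_i=\partial f_1\cup\dots\cup\partial f_i$ is connected and $\partial f_i$ shares at least one edge with $H_{i-1}$. I maintain the invariant that $H_i$ admits a \emph{consistent weak placement}: a choice of one valid placement per vertex of $H_i$ such that the endpoints of every edge are weakly reachable. Passing from $H_{i-1}$ to $H_i$ adds the portion of $\partial f_i$ not already placed, whose attachment vertices in $H_{i-1}$ already carry fixed placements, and the task is to extend the placement over the new vertices so that the cycle $\partial f_i$ closes up consistently; this is exactly the concatenation of two adjacent faces. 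Both this step and the base case $H_1=\partial f_1$ are instances of one cycle-closing subproblem: given a simple cycle $W=v_0v_1\cdots v_{k-1}v_0$ in $G_1$ with some of the $C_{v_i}$ prescribed, choose the remaining placements so that consecutive ones are weakly reachable.

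The heart of the argument, and the step I expect to be the main obstacle, is to show this cycle-closing subproblem is always solvable for plane $G_1$ and $G_2$ — which is precisely what fails in the NP-hard general case (\thmref{thm:NP_hard_general}) and for the strong distance (\thmref{theorem:NP_plane}). Here I would use the geometry: the union $A=\bigcup_i T_\eps(e_i)$ of the $\eps$-tubes around the edges of $W$ is a topological annulus around the simple closed curve $W$, and for the weak distance reachability across $e_i$ is just connectivity of $G_2\cap T_\eps(e_i)$. Because $G_2$ is plane, the arcs of $G_2\cap A$ are pairwise noncrossing inside the annulus, so the components of $G_2\cap A$ that accompany $W$ are cyclically ordered and cannot interleave. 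This noncrossing structure is what rules out the ``twist'' (nontrivial monodromy) that would let a greedy choice $C_{v_0}\to C_{v_1}\to\cdots\to C_{v_0}$ fail to return to its starting placement: following a connected component of $G_2\cap A$ once around the annulus yields placements $C_{v_0},\dots,C_{v_{k-1}}$ with consecutive pairs weakly reachable and the last one again reachable to $C_{v_0}$. Validity of every placement (guaranteed by the pruning of Step~3) is what keeps this accompanying walk from getting stuck, and the weak property (backtracking allowed, so reachability is plain connectivity in the tube) is exactly what makes ``same component'' transitive and stable under concatenation of adjacent faces, unlike the monotone matching demanded by the strong distance.

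Finally I would check the bookkeeping: assembling the per-face placements along the face order produces a single graph mapping $s\colon G_1\to G_2$ mapping each edge within its $\eps$-tube, i.e.\ a weak $\eps$-placement of $G_1$, whence $\distDirWeak(G_1,G_2)\le\eps$. The delicate point to get exactly right is the annulus/monodromy claim when $\eps$ is large relative to the geometry of $W$, so that $A$ is not cleanly embedded as an annulus or $G_2\cap A$ winds; I expect this to require replacing the naive ``radial order'' by the cyclic order in which $G_2$ crosses the tube boundary, and verifying that this combinatorial order is still consistent around $W$.
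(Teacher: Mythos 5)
Your high-level plan (process faces in dual-tree order, reduce everything to a cycle-closing core) matches the skeleton of the paper's proof, but there are two genuine gaps, and they are exactly the two places where the paper has to work hardest. First, your inductive invariant is too weak. You maintain only that the prefix $H_{i-1}$ carries \emph{some} consistent weak placement, and then try to extend it over $\partial f_i$ with the attachment vertices' placements already frozen. That constrained extension problem is not always solvable: a shared vertex or shared path typically has several valid placements, and the one your prefix happened to pick can be incompatible with closing the new cycle, even though some other global choice works (so the lemma would be true while your induction strands). The same defect appears in your block-gluing step, where a cut vertex's placement is dictated by whichever block is processed first. The paper's proof is organized precisely to avoid ever making a revisable choice: it defines for each face a canonical \emph{outermost} placement of each boundary vertex and edge, shows that the concatenation of outermost paths is a weak $\eps$-placement of the bounding cycle, maintains the invariant that the outer boundary of the already-mapped region sits at its outer placement, and when a new face is merged it reroutes only the \emph{new} face's mapping (replacing its image $o_2$ of the shared path by the prefix's image $o_1$), never touching the prefix. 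Without some such extremal invariant the face-by-face induction does not go through.

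Second, the cycle-closing claim itself --- the heart of your argument --- is asserted rather than proved. Non-crossing of \emph{distinct} components of $G_2\cap A$ is automatic (if they crossed they would be one component), and it does rule out ``swap'' gadgets between different components; but the dangerous monodromy lives \emph{inside a single component}: a spiral-like connected plane subgraph winding several times around the annulus is one component of $G_2\cap A$, and ``following it once around'' does not return to its start. Excluding this requires combining the pruning of Step~3 with planarity in a concrete way, not just citing non-interleaving. The paper's mechanism is, again, the extremal choice: outermost placements of consecutive vertices are mutually reachable because any path witnessing validity of one must intersect the outermost path of the other inside the shared tube, and since $G_2$ is plane such an intersection point is a vertex of $G_2$, so one can perform path surgery there. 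This ``a crossing forces a shared vertex, hence surgery is possible'' step is how planarity of $G_2$ actually enters the proof, and it is absent from your sketch. You correctly flag the monodromy claim as the delicate point; the fix is not a finer cyclic order of tube-boundary crossings but the canonical outermost placement, which makes cyclic consistency automatic and simultaneously repairs the inductive invariant above.
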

\begin{proof}
A \emph{tree-substructure} of $G_1$ is a tree $T = (V_T, E_T)$ induced
by the vertex set $V_T \subset V_1$ with a root vertex $r \in V_T$,
such that for all vertices $v\in V_T$, $v \neq r$, $v$ is not an
endpoint of an edge $e \in E_1 \backslash E_T$ and such that $T$ is
maximal, in the sense that when adding one additional vertex, $T$
contains a cycle. We first remove all tree-substructures of $G_1$ and
map these as in the proof of \lemref{lem:gd-map-tree}.  Next, we
consider all faces of the remainder of $G_1$ and show how to
iteratively map them.

Consider a cycle $C$ bounding a face $F$ and let $e_1$ and $e_2$ be
two edges of $C$ incident to a vertex $v$. Let $b$ be the line segment
of the bisector of $e_1$ and $e_2$ inside $B_\eps(v)$. We define the
\emph{outermost placement} of $v$ as the placement which intersects
$b$ at maximum distance to the endpoint of $b$ inside $F$, see
Figure~\ref{fig:gd-outer-placement}~(a).  Furthermore, we define an
\emph{outermost path} in $G_2$ of an edge $e=\{u,v\}$ of $G_1$ as the
path $P_{out}$ with maximum distance to $F$ connecting the outermost
placements of $u$ and $v$. That is, no subpath $Q$ of $P_{out}$ can be
replaced by a path $R$ such that $\delta_H(R,B) \leq \delta_H(Q,B)$,
where $\delta_H$ is the Hausdorff distance and $B$ is the boundary of
the tube $T_\eps(e)$ which lies inside the face $F$.  Note that if an
edge is shorter than $2\eps$, and hence the $\eps$-balls around the
vertices overlap, then so possibly do the placements. In particular,
in this case the outer placements may overlap, in which case the edge
placements degenerate, see Figure~\ref{fig:gd-outer-placement}~(a).
Finally, we define an \emph{outer placement} $O$ of $C$ in $G_2$ as
the concatenation of all outermost paths of edges of $C$.

Note that if $C$ is sufficiently convex the outer placement is simply
the cycle that bounds $H$.  See
Figure~\ref{fig:gd-outer-placement}~(b) for an example, where the red
outer placement bounds the outer face of $G_2$ restricted to red and
pink vertices and edges.  The outer placement of $C$ is a weak
$\eps$-placement of $C$.

\begin{figure}
\centering
\subfloat[A vertex with its placement.]{
	\includegraphics[page=1, scale=0.6]{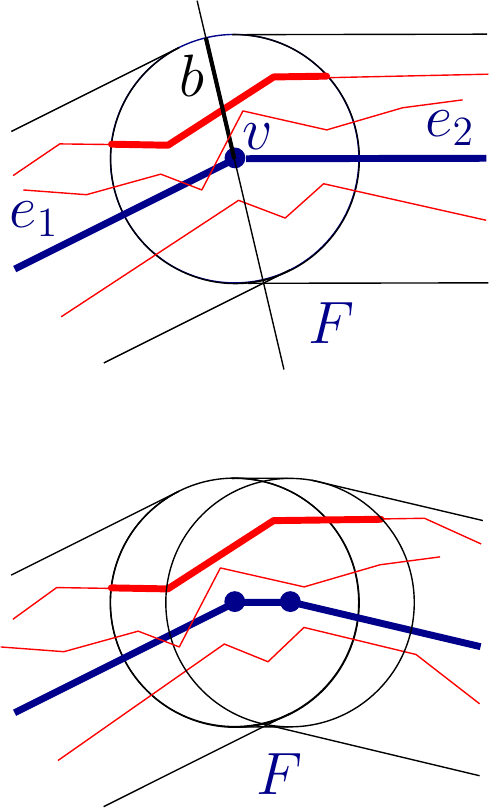}
	\label{subfig:outer-placement}}
\hfill
\subfloat[A cycle with its outer placement.]{
	\includegraphics[page=1, scale=1.3]{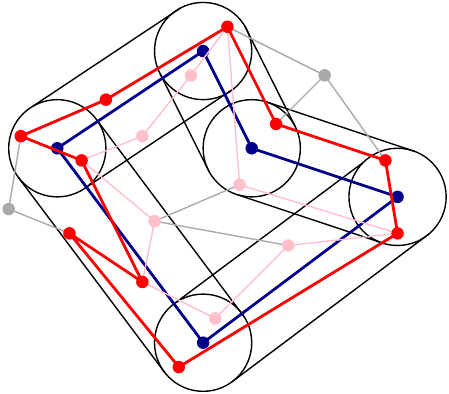}
	\label{subfig:gd-outer-placement}}
\hfill
\subfloat[Merging two outer placements.]{
	\includegraphics[page=2, scale=1.15]{weak-dist-plane}
	\label{subfig:gd-merge-outer-placements}}
\caption{Illustration of outer placements and how to merge them. In (c) 
the outer placements of cycles $C_1$ and $C_2$ can be merged by mapping the shared path $P$ through~$o_1$.}
\label{fig:gd-outer-placement}
\end{figure}

Now, consider two cycles $C_1$ and $C_2$ bounding adjacent faces of
$G_1$, which share a single (possibly degenerate) path $P$ between
vertices $u$ and $v$.  Let $O_1$ and $O_2$ be the outer placements of
$C_1$ and $C_2$, respectively.  By definition of an outermost
placement, $O_1$ and $O_2$ must intersect inside the intersection of
the $\eps$-tubes of $C_1$ and $C_2$. Let $o_1$ and $o_2$ of $O_1$ and
$O_2$ be the parts between the intersections of $O_1$ and $O_2$
containing the respective images of $P$. Again, by definition of an
outermost placement, it holds that $o_1$ is completely inside $O_2$
and $o_2$ is completely inside $O_1$.

This is illustrated in Figure~\ref{fig:gd-outer-placement}~(c). By
planarity there must be a vertex at the intersections of $O_1$ and
$O_2$. Thus, we can construct a mapping $O'_2$ of $C_2$ that consists
of $o_1$ and $O_2\setminus o_2$.  This is a weak $\eps$-placement of
$C_2$ for which the image of the shared path $P$ is identical to its
image in $O_1$.  Thus, we can merge $O_1$ and $O'_2$ to obtain a weak
$\eps$-placement of these two adjacent cycles.  Note that the mapping
of $C_1$ is not modified in this construction.  Additionally, the
image of the cycle bounding the outer face is its outer placement.
The same argument can be applied iteratively when $C_1$ and $C_2$
share multiple paths.

If there are two cycles $C_1$ and $C_2$ which are connected by a path
$P$ such that one endpoint $u$ of $P$ lies on $C_1$, the other
endpoint $v$ of $P$ lies on $C_2$ and all other vertices of $P$ are no
vertices of $C_1$ or $C_2$, we can still construct a common placement
for $C_1$, $C_2$ and $P$: Let $C_u$, $C_v$ be the outermost placements
of $u$ and $v$, respectively and let $D_v$ be a vaild placement of $v$
which is connected by a path $Q$ in $G_2$ to $C_u$ such that
$\delta_{wF}(Q,P) \leq \eps$. Such a placement $D_v$ must exist as
$C_u$ is a valid placement. If $D_v$ = $C_v$ we have found a common
valid placement for $C_1$, $C_2$ and $P$. If $D_v \neq C_v$, by
definition of an outermost placement, the path $Q$ must intersect the
outermost placement $O$ of $C_2$ inside the intersection of the tubes
$T_\eps(P)$ and $T_\eps(C_2)$. As $G_2$ is plane, there is a vertex
$w$ at the intersection and the resulting path $R=Q_{C_u\rightarrow w}
+ O_{w \rightarrow C_v}$ with $\delta_{wF}(R,P)\leq\eps$ connects
$C_u$ and $C_v$.

Now, we iteratively map the cycles bounding faces of $G_1$ until $G_1$
is completely mapped.  Let $\langle F_1,F_2,\ldots,F_k\rangle$ be an
ordering of the faces of $G_1$ such that each $F_i$, for $i\geq 2$ is
on the outer face of the subgraph $\G_{i-1} := C_1\cup
C_2\cup\ldots\cup C_{i-1}$ of $G_1$, where $C_j$ is the cycle bounding
face $F_j$.  Thus, let $F_1$ be an arbitrary face of $G_1$ and
subsequently choose faces adjacent to what has already been mapped.
Hence when adding a cycle $C_i$, we have already mapped $\G_{i-1}$
such that the cycle bounding its outer face is mapped to its outer
placement.  Thus, we can treat $\G_{i-1}$ as a cycle, ignoring the
part of it inside this cycle, and merge its mapping with $C_i$ using
the procedure described above.  This leaves the mapping of $\G_{i-1}$
unchanged, hence this is still a weak $\eps$-placement of $\G_{i-1}$.
However, the mapping of $C_i$ is now modified to be identical to that
of $\G_{i-1}$ in the parts where they overlap.  Thus, we can merge
these mappings to obtain a weak $\eps$-placement of $\G_i$.  After
mapping $F_k$ we have completely mapped $G_1$.
\end{proof}
\lemref{lem:runtime_reachability} and \lemref{lem:runtime_pruning}
together with \lemref{lem:gd-map-tree} and \lemref{lem:gd-weak-plane}
directly imply the following theorems. Note, that $m_1 = O(n_1)$ for
plane graphs and trees, in particular.

\begin{theorem}[Decision Algorithm for Weak Graph Distance]
\label{thm:gd-weak-dist}
Let $\eps>0$. If $G_1$ is a tree, or if $G_1$ and $G_2$ are plane graphs,
then \algoref{alg:general} decides whether $\distDirWeak(G_1,G_2)
\leq\eps$ in $O(n_1 \cdot m_2)$ time and space.
\end{theorem}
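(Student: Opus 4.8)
The plan is to establish Theorem~\ref{thm:gd-weak-dist} as a direct corollary of the running-time lemmas together with the correctness lemmas for the two structural cases. The key conceptual point is that Lemma~\ref{lem:gd-map-tree} (for trees) and Lemma~\ref{lem:gd-weak-plane} (for plane graphs) have already reduced the decision problem to a condition that is checkable after Steps~1--3 of \algoref{alg:general}: namely, that every vertex of $G_1$ retains at least one weakly valid $\eps$-placement. Both lemmas assert that this condition is \emph{sufficient} for $\distDirWeak(G_1,G_2)\leq\eps$, and it is trivially \emph{necessary} (if some vertex has no weakly valid placement, no graph mapping can exist, so the distance exceeds $\eps$). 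Thus Step~4 of the algorithm reduces to the trivial check ``does every vertex still have a surviving placement after pruning?''

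First I would argue correctness. By \defref{def:eps-placements} and the definition of a weakly valid placement, a graph mapping realizing $\distDirWeak(G_1,G_2)\leq\eps$ exists only if each vertex admits a weakly valid $\eps$-placement, so the emptiness of any vertex's placement list certifies $\distDirWeak(G_1,G_2)>\eps$. Conversely, if $G_1$ is a tree, Lemma~\ref{lem:gd-map-tree} shows that one surviving weakly valid placement per vertex suffices to build a weak $\eps$-placement of all of $G_1$; if $G_1$ and $G_2$ are plane, Lemma~\ref{lem:gd-weak-plane} does the same via the outer-placement merging construction. In either case the surviving-placement condition is both necessary and sufficient, so the algorithm decides the predicate correctly.

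Next I would account for the running time. By Lemma~\ref{lem:runtime_reachability}, Steps~1 and~2 for the weak graph distance cost $O(m_1\cdot m_2)$ time and space, and by Lemma~\ref{lem:runtime_pruning}, Step~3 costs $O(m_1\cdot m_2)$ time. Step~4 only inspects each vertex's placement list for nonemptiness, which is subsumed by the earlier bounds. Hence the total is $O(m_1\cdot m_2)$. The final simplification uses the remark preceding the theorem: for trees and plane graphs $m_1=O(n_1)$ (a tree has $n_1-1$ edges, and a plane graph satisfies $m_1=O(n_1)$ by Euler's formula), so the bound becomes $O(n_1\cdot m_2)$ time and space, as stated.

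I do not expect a genuine obstacle here, since the theorem is explicitly flagged in the text as following ``directly'' from the cited lemmas; the proof is essentially a bookkeeping assembly. The only point requiring a sentence of care is making explicit that Step~4 is trivial under the hypotheses of the two lemmas---that is, that the surviving-placement condition they guarantee to be sufficient is exactly the condition the algorithm tests---and then substituting $m_1=O(n_1)$ to convert the $O(m_1\cdot m_2)$ bound into the advertised $O(n_1\cdot m_2)$.
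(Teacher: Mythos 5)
Your proof is correct and follows exactly the paper's route: the paper derives the theorem by combining the runtime bounds of Lemmas~\ref{lem:runtime_reachability} and~\ref{lem:runtime_pruning} with the sufficiency guarantees of Lemmas~\ref{lem:gd-map-tree} and~\ref{lem:gd-weak-plane}, together with the observation that $m_1=O(n_1)$ for trees and plane graphs. The only difference is that you spell out the (trivial) necessity direction and the reduction of Step~4 to a nonemptiness check, which the paper leaves implicit in its ``directly imply'' remark.
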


\begin{theorem}[Decision Algorithm for Graph Distance]
  \label{thm:gd-strong-dist}
  Let $\eps>0$. If $G_1$ is a tree, then \algoref{alg:general} decides
  whether $\distDir(G_1,G_2) \leq\eps$ in $O(n_1 \cdot m_2^2)$ time
  and space.
\end{theorem}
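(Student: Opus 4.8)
The plan is to instantiate the general decision algorithm (\algoref{alg:general}) with the fixed value $\eps$ and show that, when $G_1$ is a tree, Step~4 collapses to a trivial scan over the vertices of $G_1$. Correctness hinges on the equivalence that $\distDir(G_1,G_2)\le\eps$ if and only if, after running Steps~1--3, every vertex of $G_1$ retains at least one valid $\eps$-placement. The ``if'' direction is exactly \lemref{lem:gd-map-tree}, which I may assume: for a tree, a valid placement at every vertex lets us map $G_1$ from an arbitrary root outward, at each step choosing a valid placement of the current vertex that is reachable from its parent's placement, where validity guarantees such a choice always exists and acyclicity guarantees no further constraints close back on already-mapped vertices.

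For the ``only if'' direction I would argue that the pruning in Step~3 never discards a placement used by an optimal mapping. Suppose $\distDir(G_1,G_2)\le\eps$ and let $s$ be a realizing graph mapping. Each vertex $v$ is mapped to a point $s(v)$ lying on a unique $\eps$-placement $C_v$, and for every incident edge $\{u,v\}$ the path $s(\{u,v\})$ witnesses that $C_u$ and $C_v$ are reachable; hence each $C_v$ satisfies the validity condition for every neighbor. Since pruning removes a placement only when it lacks a reachable partner for some incident edge, the family $\{C_v\}_{v\in V_1}$ survives, so every vertex keeps a valid placement. Combining both directions, Step~4 merely tests whether pruning emptied any vertex: the output is ``$\distDir(G_1,G_2)>\eps$'' if some vertex has no valid placement, and ``$\distDir(G_1,G_2)\le\eps$'' otherwise.

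For the running time I would sum the four steps in the strong-distance setting. By \lemref{lem:runtime_reachability}, Steps~1 and~2 take $O(m_1\cdot m_2^2)$ time and space; by \lemref{lem:runtime_pruning}, Step~3 takes $O(m_1\cdot m_2^2)$ time. Step~4 is a single pass over the vertices checking for a surviving placement, costing $O(n_1\cdot m_2)$, which is dominated. Since $G_1$ is a tree, $m_1=n_1-1=O(n_1)$, so the total is $O(n_1\cdot m_2^2)$ time and space, as claimed.

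The genuine content is carried by \lemref{lem:gd-map-tree}, which I am assuming; the part that needs care in my own write-up is the soundness of pruning, i.e.\ confirming that the fixpoint reached in Step~3 coincides with the set of placements usable by some graph mapping. I expect this to be the main, though mild, obstacle: one must verify that deleting an invalid placement can only invalidate placements that were themselves unusable, so the recursive pruning does not overshoot. The acyclicity of $G_1$ is precisely what makes the local, per-edge reachability condition globally sufficient; there are no cycles whose simultaneous satisfiability could fail even when each vertex individually retains a valid placement, which is exactly the phenomenon that makes the general and the plane strong cases NP-hard.
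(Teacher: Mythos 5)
Your proposal is correct and follows essentially the same route as the paper: the paper derives the theorem by combining \lemref{lem:runtime_reachability}, \lemref{lem:runtime_pruning}, and \lemref{lem:gd-map-tree} with the observation that $m_1=O(n_1)$ for trees, exactly as you do. The only difference is that you explicitly verify the soundness of pruning (that the placements $C_v$ induced by a realizing mapping are never deleted), a direction the paper treats as immediate ("Clearly $\distDir(G_1,G_2)>\eps$ if there is a vertex that has no valid $\eps$-placement"), and your argument for it is sound.
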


\paragraph*{Computing the Distance}
\label{sec:opt}
To compute the graph distance, we proceed as for computing the \fr
between two curves: We search over a set of critical values and employ
the decision algorithm in each step. The following types of critical
values can occur:
\begin{enumerate}
\item A new vertex-placement emerges: An edge in $G_2$ is at distance $\eps$ from a vertex in $G_1$. 
\item Two vertex-placements merge: The vertex in $G_2$ where they connect is at distance $\eps$ from a vertex in $G_1$.
\item The (weak) \fr between a path and an edge is $\eps$: these are
  described in~\cite{alt1995computing}. There are exponentially many paths in $G_2$, but each value the Fr\'echet distance may attain is defined by either a vertex and an edge, or two vertices and an edge.
\end{enumerate}
There are $O(n_1\cdot m_2)$ critical values of the first two types,
and $O(m_1\cdot n_2^2)$ of type three.  Parametric search can be used
to find the distance as described in~\cite{alt1995computing}, using
the decision algorithms from Theorems~\ref{thm:gd-weak-dist}
and~\ref{thm:gd-strong-dist}. This leads to a running time of
$O(n_1\cdot m_2 \cdot \log (n_1+n_2))$ for computing the weak graph
distance if $G_1$ is a tree or both are plane graphs. And the total
running time for computing the graph distance if $G_1$ is a tree is
$O(n_1\cdot m_2^2\cdot \log(n_1+n_2))$.

\section{Hardness Results and Algorithms for Plane Graphs}\label{sec:plane_graphs}
Lemma~\ref{lem:gd-weak-plane} does not hold for plane graphs and the
directed strong graph distance because in general outer placements of
cycles cannot be combined to a placement of $G_1$ as shown in the
proof of Lemma~\ref{lem:gd-weak-plane}. See
Figure~\ref{fig:plane-valid-counterexample} for a counterexample. In
fact we show that deciding the directed strong graph distance for
plane graphs is NP-hard.

\begin{figure}
\centering
\includegraphics[width=.7\textwidth]{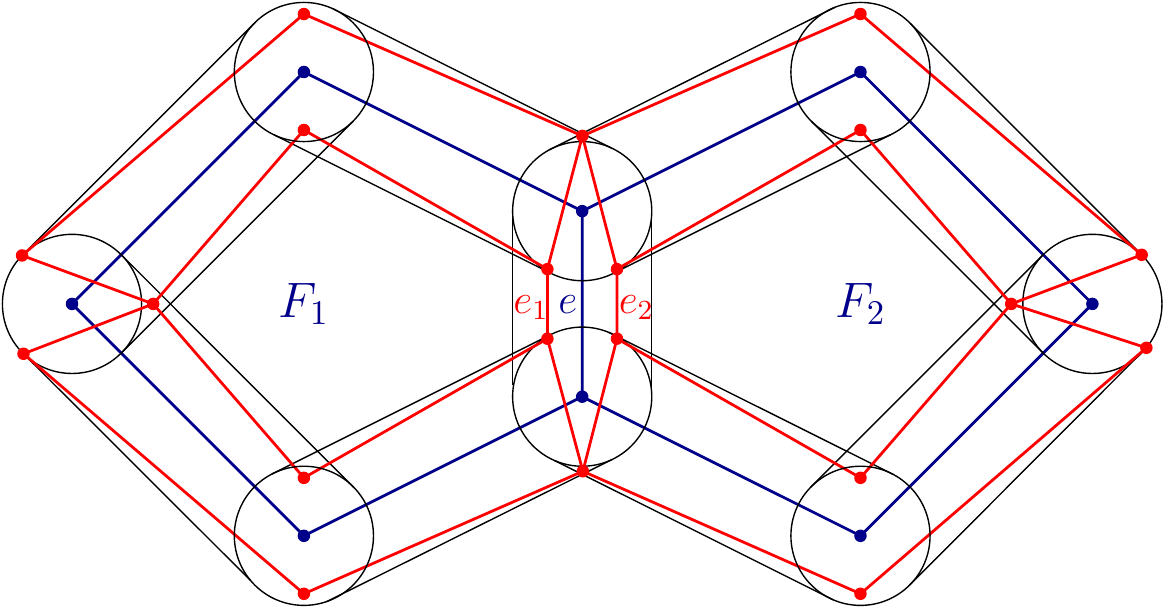}
\caption{An example of plane graphs $G_1$ (blue) and $G_2$ (red) where
  every vertex of $G_1$ has two valid placements, but there is no
  $\eps$-placement of $G_1$: If the central edge $e$ is mapped to a
  path through $e_1$, there is no way to map the cycle bounding face
  $F_2$ on the right, and if $e$ is mapped to a path through $e_2$,
  the cycle bounding $F_1$ cannot be mapped.}
\label{fig:plane-valid-counterexample}
\end{figure}

\subsection{NP-Hardness for the Strong Distance for Plane Graphs}
\label{sec:NPC-plane}
\begin{theorem}
For plane graphs $G_1$ and $G_2$, deciding whether $\distDir(G_1,G_2) \leq\eps$ is NP-hard.
\label{theorem:NP_plane}
\end{theorem}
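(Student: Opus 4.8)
The plan is to prove NP-hardness by reduction, and the natural source problem is again a constraint-satisfaction-style problem, but now I must respect two additional restrictions that were not present in the general case (Theorem~\ref{thm:NP_hard_general}): both $G_1$ and $G_2$ must be \emph{plane} (crossing-free in their own embeddings), and I am working with the \emph{strong} directed distance $\distDir$. The CSP reduction from Theorem~\ref{thm:NP_hard_general} fails here because it placed all variable-vertices on a large circle and connected value-vertices arbitrarily, which generically forces crossings in $G_2$. So the first task is to choose a source problem whose instances are themselves planar, or which can be laid out planarly. A good candidate is \textsc{Planar 3-SAT} or a planar/rectilinear variant of CSP, so that the incidence structure between variables and clauses can be drawn in the plane without crossings, and this planarity can then be transferred to the embeddings of $G_1$ and $G_2$.

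\medskip

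The key structural idea I would exploit is the phenomenon already illustrated in Figure~\ref{fig:plane-valid-counterexample}: for plane graphs and the strong distance, a vertex can have several valid placements in isolation, yet a \emph{global} consistent choice of placements need not exist, because mapping a shared edge to one side of $G_2$ forbids mapping an adjacent cycle. This is exactly the gadget that encodes a Boolean choice. So I would build \textbf{variable gadgets}, each a small cycle (or a short edge flanked by two faces as in the counterexample) whose two admissible $\eps$-placements correspond to the values \texttt{true} and \texttt{false}; the two placements are forced to be mutually exclusive precisely because $G_2$ is plane and the two candidate paths lie on opposite sides of an edge. Then I would build \textbf{clause gadgets} and \textbf{wires} that propagate the chosen placement along paths of $G_1$, so that a clause face can be mapped only if at least one of its incident variable-wires carries the satisfying value. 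The central technical work is verifying, gadget by gadget, that (i) each variable vertex genuinely has exactly two valid $\eps$-placements, (ii) the strong \fr\ constraint (monotone, non-backtracking matching inside the $\eps$-tube) is what forces the choice to be discrete, and (iii) the whole construction is realizable as two plane straight-line embedded graphs.

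\medskip

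The order of the argument I would follow is: first fix $\eps=1$ and describe the variable gadget together with a proof that its valid placements biject with $\{\texttt{true},\texttt{false}\}$, relying on Lemma~\ref{lem:gd-weak-plane}'s failure mode and the strong-distance reachability notion from Definition~\ref{def:eps-placements}; second, describe the wire/propagation gadget and prove that along a wire all vertices are forced into the placement consistent with the variable they emanate from; third, describe the clause gadget and prove the satisfaction equivalence, namely that the clause face admits a valid $\eps$-placement iff at least one incoming wire carries the satisfying literal; fourth, argue global planarity of both embeddings by routing wires along the planar layout of the source formula; and finally assemble these into the statement $\distDir(G_1,G_2)\le\eps$ iff the formula is satisfiable.

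\medskip

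The main obstacle, I expect, is the clause gadget and the \emph{consistency} of choices across the whole graph. It is easy to encode a single binary choice with the two-faces-sharing-an-edge gadget, but making the choice \emph{propagate faithfully} and \emph{interact correctly at a clause} while keeping both graphs plane and keeping the strong \fr\ (rather than the weak one, which Lemma~\ref{lem:gd-weak-plane} already shows is polynomial) as the enforcing mechanism is delicate. In particular I must ensure no unintended valid placement arises that would let an unsatisfying assignment still map $G_1$ into $G_2$: this requires carefully controlling edge lengths relative to $\eps$, guaranteeing $\eps$-balls and $\eps$-tubes overlap only where intended, and checking that the strong monotonicity condition rules out ``shortcut'' paths that the weak distance would permit. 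I would therefore spend most of the effort proving the soundness direction (a valid global $\eps$-placement yields a satisfying assignment), since the completeness direction (a satisfying assignment yields a placement) typically follows by explicit construction once the gadgets are defined.
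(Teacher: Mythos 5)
Your overall framework matches the paper's strategy: fix $\eps$, build variable, wire, and clause gadgets out of placements whose mutual exclusivity is enforced by the no-backtracking condition of the strong \fr, route everything planarly along the layout of a planar SAT formula, and prove both directions of the equivalence. However, the proposal leaves unresolved exactly the step that carries the technical weight, and the route you sketch for it would not go through as stated. You propose a clause gadget whose face ``admits a valid $\eps$-placement iff at least one incoming wire carries the satisfying value,'' i.e., a direct OR-gadget for \textsc{Planar 3-SAT}. The paper's construction shows why this is problematic: the gadget that arises naturally from this geometry (three edges meeting at $120^{\circ}$, with labeled edges of $G_2$ in their $\eps$-tubes) enforces a \emph{not-all-equal} constraint, not an OR constraint --- symmetric backtracking arguments kill all-\textsc{True} and all-\textsc{False} equally. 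And one cannot simply reduce from a planar NAE variant, because \textsc{Monotone-Planar-NAE-3-Sat} is polynomial-time solvable. The paper escapes this trap with two ideas your proposal lacks: (i) the signals in all gadgets are inherently \emph{asymmetric} --- a \textsc{True} signal may be degraded to \textsc{False} (via a \textsc{Permute} gadget), but a \textsc{False} signal can never be upgraded to \textsc{True}; and (ii) the source problem is taken to be \textsc{Monotone-Planar-3-Sat}, whose monotone clause structure is exactly what this one-way degradation can simulate: composing the NAE-core with three \textsc{Permute} gadgets yields a correct clause gadget, since an all-\textsc{True} input can be locally demoted to a legal NAE configuration, while an all-\textsc{False} input cannot be repaired.

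Your choice of plain \textsc{Planar 3-SAT} compounds the difficulty: handling mixed clauses requires negating a signal along a wire, which is precisely the \textsc{False}-to-\textsc{True} conversion the mechanism forbids; the monotone variant (together with a \textsc{Split} gadget for variables occurring in several clauses, and a variable gadget that forbids only the \textsc{True}--\textsc{True} output combination) is what makes negation unnecessary. Your variable gadget idea --- two mutually exclusive placements in the style of Figure~\ref{fig:plane-valid-counterexample} --- is a reasonable alternative encoding of a binary choice, but by itself it does not address propagation and clause evaluation. Since you explicitly defer the clause gadget and consistency argument as ``delicate'' without supplying the NAE-plus-\textsc{Permute} resolution or the monotone source problem, the proposal as it stands has a genuine gap at the heart of the reduction.
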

\begin{proof}
We prove the NP-hardness by a reduction from
\textsc{Monotone-Planar-3-Sat}. In this \textsc{3-Sat} variant, the
associated graph with edges between variables and clauses is planar
and each clause contains only positive or only negative literals. The
overall idea is to construct two graphs $G_1$ and $G_2$ based on a
\textsc{Monotone-Planar-3-Sat} instance $A$, such that $A$ is
satisfiable if and only if $\distDir(G_1,G_2) \leq\eps$. That is, we
construct subgraphs of $G_1$ and $G_2$, where some edges of $G_2$ are
labeled \textsc{True} or \textsc{False} in a way, such that only
certain combinations of \textsc{True} and \textsc{False} values can be
realized by a placement of $G_1$ to $G_2$.  To realize other
combinations, backtracking, at least along one edge of $G_1$, is
necessary. But this is not allowed for the strong graph distance.  In
the following, we describe the construction of the gadgets (subgraphs)
for the variables and the clauses of a \textsc{Monotone-Planar-3-Sat}
instance. Additionally, we need a gadget to split a variable if it is
contained in several clauses and a gadget which connects the variable
gadgets with the clause gadgets. Furthermore, we prove for each gadget
which \textsc{True} and \textsc{False} combinations can be realized
and which combinations are not possible.  All constructed edges are
straight line edges. The graph $G_1$ is shown in blue color and $G_2$
is shown in red color in the sketches used to illustrate the ideas of
the proof. We denote $T_{\eps}(e)$ as the $\eps$-tube around the edge
$e$. All vertices of the graph can be either mapped arbitrarily within
a given $\eps$-surrounding and with a given minimal distance from each
other or must lie at the intersection of two lines. Thus, we can
ensure that the construction uses rational coordinates only and can be
computed in polynomial time.

Following a reparameterization of an edge or a path (embedded curve)
as described in Definition \ref{def:distances}, we denote by the term
"walking along an edge or a path". In this sense backtracking means
that the reparametization is not injective and thus, if backtracking
is necessary to stay within $\eps$-distance to another path or edge
one can conclude that the \fr distance between these paths or edges is
greater than $\eps$ as in this case it is required that the
reparameterization is injective.

Furthermore, we call a path labeled \textsc{True} (\textsc{False})
shortly a \textsc{True} (\textsc{False}) signal.

\begin{figure}
\centering
\includegraphics[scale=0.745]{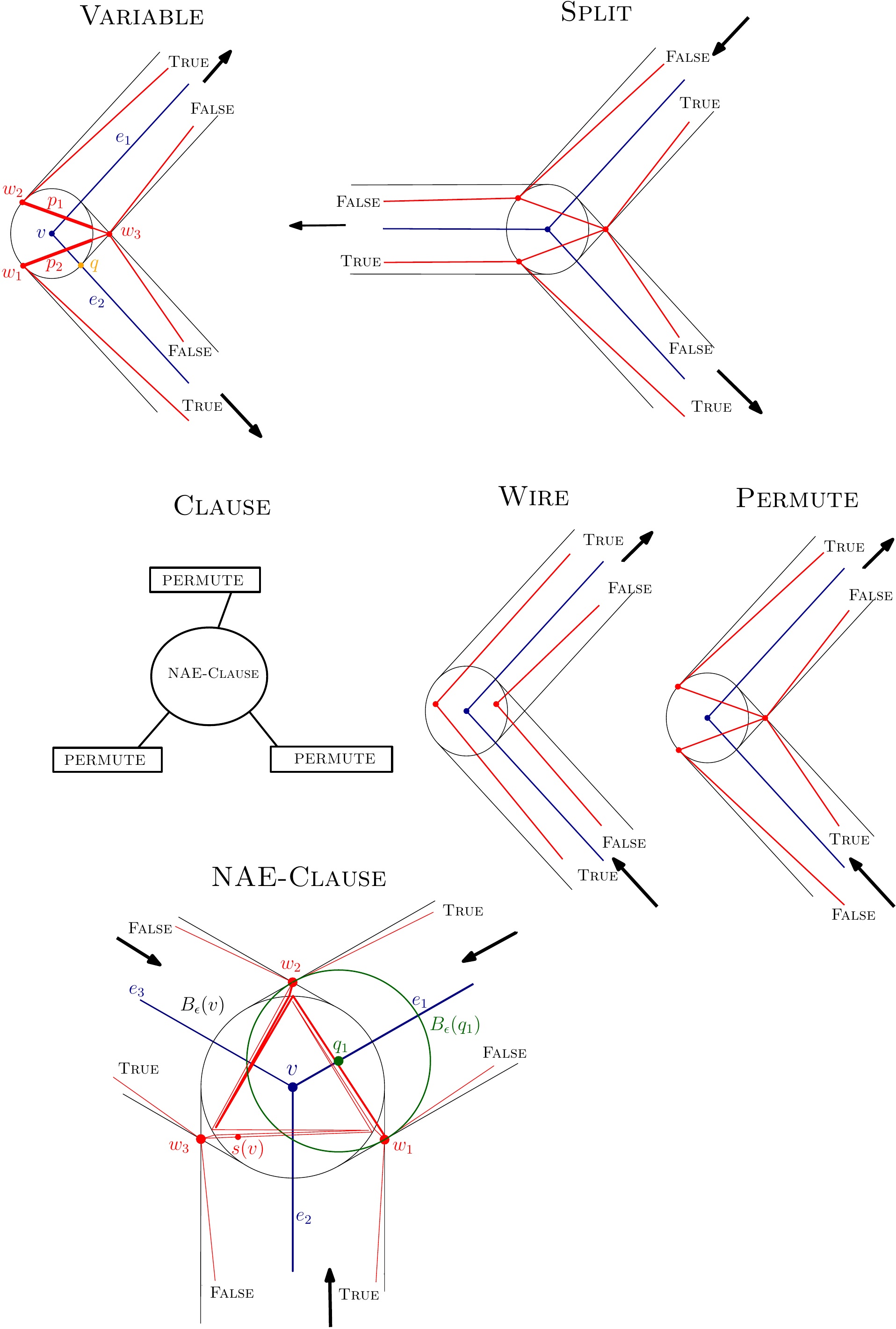}
\caption{Building blocks to build graph-similarity instance given a \textsc{Monotone-Planar-3-Sat} instance.}
\label{fig:gadgets}
\end{figure}

In the following, we give a detailed description of the construction
and the properties of the gadgets.  The \textsc{Variable} gadget: For
each variable of $A$, we add a vertex $v$ and two edges, $e_1$ and
$e_2$, of $G_1$ incident to a vertex $v$, where the angle between
$e_1$ and $e_2$ is between $90^{\circ}$ and $120^{\circ}$. We add a
vertex $w_1$ ($w_2$) of $G_2$ on the intersection of the outer
boundaries of $T_{\eps}(e_2)$ ($T_{\eps}(e_1)$) and a line through
$e_1$ ($e_2$). Furthermore, we add a vertex $w_3$ of $G_2$ at the
intersection of the boundaries of $T_{\eps}(e_1)$ and
$T_{\eps}(e_2)$. See the upper left sketch in Figure \ref{fig:gadgets}
for an illustration. We connect $w_1$ and $w_2$ with $w_3$ and draw an
edge from $w_1$ and $w_2$ inside the $\eps$-tubes around $e_1$ and
$e_2$, with labels \textsc{True}. Analogously, we embed two edges from
$w_3$ with the label \textsc{False}. For the \textsc{Variable} gadget
a \textsc{True}-\textsc{True} combination is not possible: There are
two placements $p_1$ and $p_2$ of the vertex $v$. Assume we choose
$p_1$. Note that one can map $e_1$ to a path containing the edge of
$G_2$ with the \textsc{True} labeling inside $T_\eps(e_1)$. Now, we
want to map $e_2$ to a path $P$ starting at some point of $p_1$, where
$P$ contains the edge of $G_2$ with the \textsc{True} labeling inside
$T_\eps(e_2)$. In this case, one has to walk along $e_2$ up to $q$
(the point on $e_2$ with distance $\eps$ to $w_3$) while walking
simultaneously to $w_3$ on $P$. But then, when walking along $P$ up to
$w_1$, one must walk back along $e_2$ up to $v$ as any point along the
interior of $e_1$ has distance greater than $\eps$ to $w_1$. Thus,
$\delta_F(e_2, P) > \eps$. It is easy to see that for any other
combination of labels there is a placement $p$ of $v$, such that $e_1$
and $e_2$ can be mapped to a path $P_1$ ($P_2$) starting at a point
$a$ of $p$ with $\delta_F(e_2, P_1) \leq \eps$ ($\delta_F(e_2, P_2)
\leq \eps$).

A \textsc{Permute} gadget is the same as the \textsc{Variable} gadget, but with a different labeling, see Figure \ref{fig:gadgets}.
The construction of the \textsc{Split} gadget is similar to the \textsc{Variable} gadget. Additionally, we add a third edge $e_3$ of $G_1$ and edges of $G_2$ from $w_2$ and $w_3$ inside the $\eps$-tube around $e_3$. For the labeling, see Figure~\ref{fig:gadgets}.

The same argument as for the \textsc{Variable} gadget is used to proof
the following statements:
\begin{itemize}
\item A \textsc{False} signal can never be converted to a \textsc{True} signal in the \textsc{Split} gadget.
\item A \textsc{False} signal can never be converted to a \textsc{True} signal in the \textsc{Permute} gadget.
\end{itemize}

Furthermore, a \textsc{True} signal can, but need not to be converted
to a \textsc{False} signal in the \textsc{Permute} gadget.

We construct the \textsc{Wire} gadget used to connect all the other
gadgets by drawing two edges $e_1$ and $e_2$ of $G_1$ incident to a
vertex $v$ (with arbitrary angle) and two vertices $w_1$ and $w_2$ of
$G_2$ inside $B_{\eps}(v)$ with non intersecting incident edges inside
$T_{\eps}(e_1)$ and $T_{\eps}(e_2)$. Obviously, it is not possible to
convert a \textsc{True} signal to a \textsc{False} signal or vice
versa, here.

For the \textsc{Clause} gadget, we first introduce a
\textsc{NAE-Clause} gadget. Here it is required that the three values
in each clause are not all equal to each other. We start the
construction of the \textsc{NAE-Clause} gadget by drawing three edges
$e_1$, $e_2$ and $e_3$ incident to a vertex $v$ with a pairwise
$120^{\circ}$ angle. We draw three vertices $w_1$, $w_2$ and $w_3$ on
the intersections of $T_{\eps}(e_1)$, $T_{\eps}(e_2)$ and
$T_{\eps}(e_3)$ with a maximum distance to $v$. Furthermore, we draw
two edges of $G_2$ inside the tubes for each vertex and label them as
shown in Figure~\ref{fig:gadgets}. Let $q_1$ be the point on $e_1$
with distance $\eps$ to $w_1$ and $w_2$. Here, it is not enough to
simply connect $w_1$ and $w_2$ with one edge as shown in the bottom
left sketch of Figure~\ref{fig:gadgets}. To force backtracking along
$e_1$ for a combination of labels which we want to exclude, we have to
ensure that a path from $w_1$ to $w_2$ leaves $B_\eps(q_1)$ but stays,
once entered, inside $B_\eps(v)$. For the other pairs, $(w_1, w_3)$
and $(w_2, w_3)$ we do the same. A possible drawing of these paths
maintaining the planarity of $G_2$ is shown in \figref{fig:gadgets}.
The three placements of $v$ are connected by the vertices $w_1$, $w_2$
and $w_3$ and it holds that there is no placement of $v$ such that an
all-\textsc{False} or an all-\textsc{True} labeling can be realized:
Suppose we map $v$ to $s(v)$ as shown in the
\figref{fig:gadgets}. Then, edges $e_2$ and $e_3$ can be mapped to
paths through edges labeled \textsc{True}. But we cannot map $e_1$ to
such a path $P$: When $P$ reaches vertex $w_1$, any corresponding
reparameterization of $e_1$ realizing $\delta_F(e_1,P)\leq\eps$ must
have reached $q_1$ as $q_1$ is the only point with distance at most
$\eps$ to $w_1$ on $e_2$. As $P$ leaves $B_\eps(q_1)$ between $w_1$
and $w_2$ and any point on $e_1$ with distance at most $\eps$ to the
part of $P$ outside $B_\eps(q_1)$ lies between $v$ and $q_1$ it
follows that $\delta_F(e_1,P)>\eps$. For symmetric reasons it follows
that any other all-equal labeling cannot be realized. However, there
is a placement of $v$, such that all three edges $e_1$, $e_2$ and
$e_3$ can be mapped to a path in $G_2$ with \fr\ at most $\eps$, for
each configuration where not all three signals have the same value.

\textsc{Monotone-Planar-NAE-3-Sat} is in $P$, but we can use the
\textsc{NAE-Clause} gadget as core of our \textsc{Clause} gadget
referring to the NP-complete version \textsc{Monotone-Planar-3-Sat}:
We obtain the \textsc{Clause} gadget by connecting each
\textsc{NAE-Clause} gadget with three \textsc{Permute} gadgets, as
shown in Figure~\ref{fig:gadgets}.

\figref{fig:graph_construction} partially shows the constructed graphs for a given \textsc{Monotone-Planar-3-Sat} $A$ consisting of the subgraphs (gadgets) described above.

\begin{figure}
\centering
\includegraphics[scale=0.745]{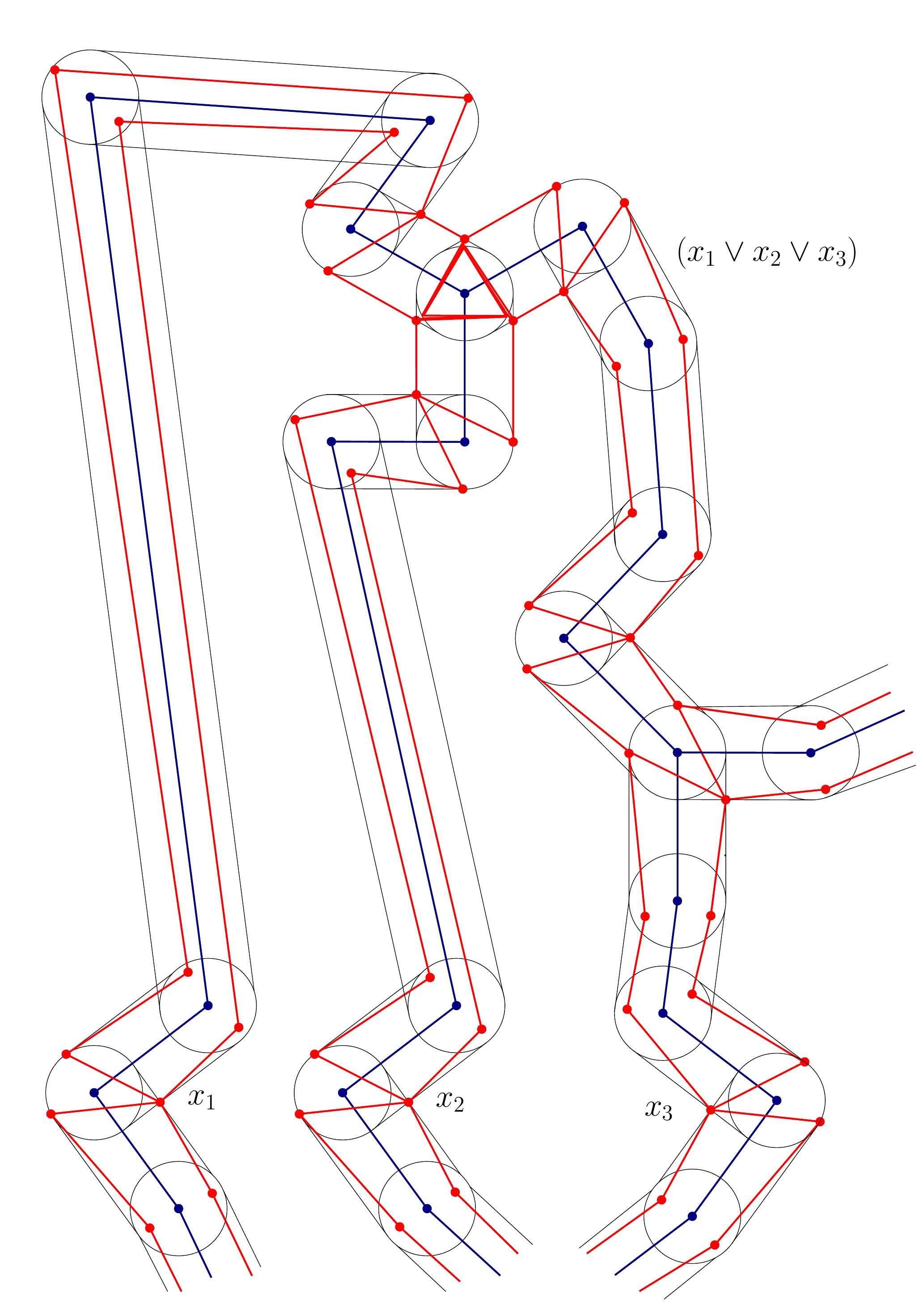}
\caption{For the \textsc{Monotone-Planar-3-Sat} instance $A$ with
  variables $V=\{x_1,x_2,\dots,x_5\}$ and clauses $C=\{(x_1 \vee x_2
  \vee x_3), (x_3 \vee x_4 \vee x_5), (\bar{x_1} \vee \bar{x_3} \vee
  \bar{x_5})\}$ the Figure shows the construction of the clause $(x_1
  \vee x_2 \vee x_3)$.}
\label{fig:graph_construction}
\end{figure}

Now, given a \textsc{Monotone-Planar-3-SAT} instance $A$, one can
construct the graphs $G_1$ and $G_2$ with the gadgets described
above. Note that all gadgets are plane subgraphs. By placing them next
to each other with no overlap, we can ensure that $G_1$ and $G_2$ are
plane graphs.

A valid placement of the whole graph $G_1$ induces a solution of $A$:
In the corresponding gadget for each positive NAE-clauses, at least
one of the outgoing edges of $G_1$ must be mapped to a path through an
edge labeled \textsc{True}. By construction, this label cannot be
converted to \textsc{False} in any of the gadgets and therefore the
corresponding variable $v$ gets the value \textsc{True}. In this case,
$v$ cannot set any of the negative clauses \textsc{True} because the
other outgoing edge must be mapped to a path through the edge of $G_2$
labeled \textsc{False} and this signal can never be switched to
\textsc{True}. The same holds for the case of negative NAE-clauses.

Conversely, given a solution $S$ of the \textsc{Monotone-Planar-3-SAT}
instance $A$, it is easy to construct a placement of $G_1$. In the
variable gadget of variable $x_1$, we chose placement $p_1$ and map
$e_1$ to a path through the edge of $G_2$, labeled \textsc{True} and
map $e_2$ to a path through the edge labeled \textsc{False} if $x_1$
is positive in $S$. If $x_1$ is negative, we chose placement $p_2$ and
map $e_1$ and $e_2$ accordingly.  All edges of the other gadget now
can be mapped to $G_2$ in a signal preserving manner (\textsc{True}
stays \textsc{True}, \textsc{False} stays \textsc{False}). If there
exists a clause $C$ in $A$, such that all three variables of $C$ are
positive (negative) in $S$, we change one signal in the
\textsc{Permute} gadget from \textsc{True} to \textsc{False}. Thus, we
have found a placement for the whole graph $G_1$.
\end{proof}

The following stronger result follows from the observation that the
characteristics of the subgraphs we constructed in the proof of
\thmref{theorem:NP_plane} still hold for a slightly larger $\eps$
value.
\begin{theorem}
It is NP-hard to approximate $\distDir(G_1, G_2)$ within a $1.10566$ factor.
\label{thm:c_hard}
\end{theorem}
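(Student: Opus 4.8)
The plan is to leverage the gadget construction from the proof of \thmref{theorem:NP_plane} essentially unchanged, and to argue that the geometric thresholds that force backtracking are robust: the ``bad'' label combinations remain infeasible not only at the exact value $\eps$, but for every $\eps' \leq c\cdot\eps$ for some constant $c > 1.10566$, while the ``good'' combinations remain realizable at $\eps$ itself. If I can establish such a gap, then any algorithm approximating $\distDir(G_1,G_2)$ within a factor better than $c$ would let me distinguish satisfiable from unsatisfiable instances of \textsc{Monotone-Planar-3-Sat}, which is NP-hard.

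Concretely, I would reexamine each gadget (\textsc{Variable}, \textsc{Permute}, \textsc{Split}, \textsc{Wire}, \textsc{NAE-Clause}) and track two quantities as functions of the angle between incident edges and the tube geometry. First, the \emph{soundness margin}: the smallest distance $\eps_{\mathrm{bad}}$ at which a forbidden label combination (e.g.\ \textsc{True}-\textsc{True} in the \textsc{Variable} gadget, or all-equal in the \textsc{NAE-Clause} gadget) first becomes realizable without backtracking. In the original proof, backtracking along an edge $e_i$ is forced because the point $q$ (at distance $\eps$ from a vertex $w$) and the portion of the path outside $B_\eps(q)$ require the reparameterization to move backward along $e_i$; as $\eps$ grows, the balls $B_\eps(\cdot)$ enlarge and at some threshold the backtracking is no longer required. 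Second, the \emph{completeness margin}: I must check that every \emph{allowed} combination, realized at distance $\eps$ in the original construction, is still realized at $\eps$ (trivially true, since enlarging the tolerance only helps). The key is that $\eps_{\mathrm{bad}}/\eps \geq 1.10566$ across all gadgets simultaneously, where the binding constraint will be whichever gadget has the tightest geometric margin, presumably the \textsc{NAE-Clause} gadget whose angles are fixed at $120^\circ$.

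The main obstacle will be the explicit geometric bookkeeping to verify that the forbidden combinations stay forbidden up to the claimed factor. For each gadget I would set up coordinates, parameterize the positions of the $G_2$-vertices $w_i$ (which lie at intersections of tube boundaries and lines through edges), and compute the exact value $\eps_{\mathrm{bad}}$ at which a monotone (non-backtracking) reparameterization first exists for the bad labeling. This reduces to checking, along the critical edge $e_i$, whether the preimage under the distance-$\eps'$ constraint of the bad path remains an interval that is traversed monotonically; the threshold occurs when two previously disjoint feasible intervals on $e_i$ merge. Computing these thresholds as functions of $\eps'$ and the angle, then taking the minimum over gadgets, yields the approximation factor. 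The number $1.10566$ is presumably exactly this minimal ratio for the $90^\circ$--$120^\circ$ angle range used in the construction.

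Finally, I would assemble the reduction: given instance $A$, build $G_1,G_2$ as before with $\eps=1$; if $A$ is satisfiable then $\distDir(G_1,G_2)\leq 1$ by the completeness direction, and if $A$ is unsatisfiable then any placement must force backtracking in some gadget, so $\distDir(G_1,G_2) > 1.10566$ by the soundness margin. Hence an approximation algorithm with ratio strictly below $1.10566$ would decide satisfiability, establishing the claim. I would remark that the construction already uses rational coordinates computable in polynomial time (as noted in the proof of \thmref{theorem:NP_plane}), so no additional care is needed on that front.
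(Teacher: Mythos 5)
Your overall strategy coincides with the paper's: keep the reduction of \thmref{theorem:NP_plane} unchanged, argue that the forbidden label combinations in the gadgets remain forbidden when the tolerance is inflated by a constant factor while the allowed ones remain realizable at $\eps$, identify the \textsc{NAE-Clause} gadget as the bottleneck, and conclude gap-hardness. That framework, and your completeness direction, match the paper exactly.

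The genuine gap is that the substance of the theorem---the constant $1.10566$---is never derived: what you defer as ``explicit geometric bookkeeping'' is essentially the entire content of the paper's proof. Concretely, the paper fixes $\eps=1$ and pins down the one degree of freedom your proposal leaves implicit: the spike connecting $w_1$ and $w_2$ is drawn with its peak arbitrarily close to the intersection of the line through $e_1$ with the unit circle around $v$. The backtracking argument then survives at tolerance $1+\delta$ precisely as long as one can still draw a spike inside $B_1(v)$ that exits $B_{1+\delta}(q_1)$, i.e.\ as long as $B_1(v)\not\subseteq B_{1+\delta}(q_1)$, where $q_1$ itself slides toward $v$ as the tolerance grows. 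Note this is a max--min condition: the margin is the largest $\delta$ achievable over admissible spike drawings, not the failure threshold of one fixed drawing, so your criterion (``two previously disjoint feasible intervals on $e_i$ merge'' for the construction taken as given) would in general certify only a smaller, drawing-dependent constant. Working out the containment condition gives $\delta_{min}+d=\tan(30^{\circ})$ with $d=\sqrt{(1+\delta_{min})^2-1}$, solving to $\delta_{min}=\frac{1}{4}-\frac{1}{4\sqrt{3}}\approx 0.10566$ (the relevant angle is $30^{\circ}$, coming from the $120^{\circ}$ geometry of the \textsc{NAE-Clause} gadget; the $90^{\circ}$--$120^{\circ}$ range of the \textsc{Variable} gadget plays no role), and one must still verify that all other gadgets tolerate a strictly larger inflation so that this gadget is indeed binding. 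Without this derivation you cannot assert the stated factor, and your claim that infeasibility persists up to ``some constant $c>1.10566$'' is an assertion about a number you have not computed---it happens to be consistent only because $1.10566$ truncates the true threshold $1+\delta_{min}$ from below.
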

\begin{proof}
\begin{figure}
\centering
\includegraphics[scale=0.65]{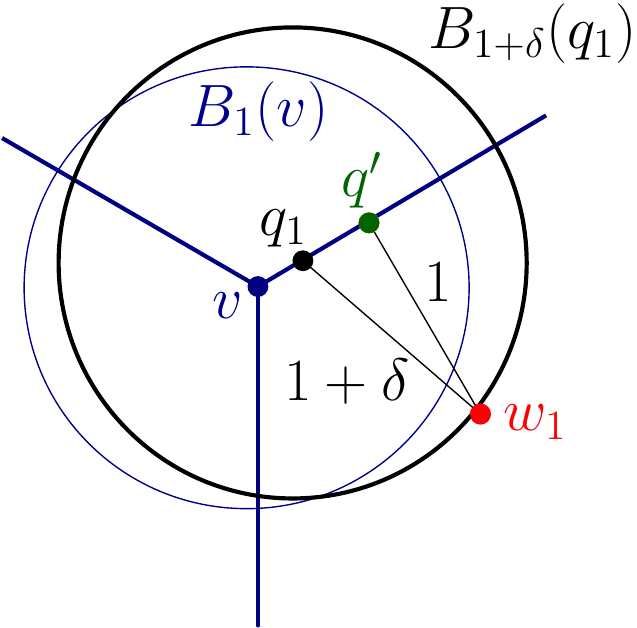}
\caption{Illustration of the proof of Theorem~\ref{thm:c_hard}.}
\label{fig:approximation}
\end{figure}
We give a detailed proof for the \textsc{NAE-Clause} gadget and note that a
similar argument holds for the other gadgets. See
Figure~\ref{fig:approximation} for an illustration of the arguments
and the calculations below.

Let us fix $\eps = 1$. As described in the proof of
Theorem~\ref{theorem:NP_plane}, we connect the vertices $w_1$ and
$w_2$ by a path which leaves $B_{1}(q_1)$, but stays inside
$B_{1}(v_1)$. (Introducing a spike which leaves $B_{1}(q_1)$ and
returns to $B_{1}(q_1)$, see Figure~\ref{fig:gadgets}). We draw the
spike such that its peak is arbitrarily close to the intersection of a
straight line through the edge $e_1$ and the $1$-circle around
$v$. When enlarging $\eps$, the point $q_1$ moves along $e_1$ toward
$v$. We need to compute the smallest value $\delta_{min}$, such that
$B_{1}(v)$ is completely contained in $B_{1+\delta_{min}}(q_1)$. For
any value $\delta < \delta_{min}$, there exists a drawing of the
spikes, such that the characteristics of the \textsc{NAE-Clause}
gadget still hold, e.g., there is no placement of $v$ allowing an
all-equal-labeling.

Note that $\delta_{min}$ equals the distance from $q_1$ to $v$, when
$q_1$ is at distance $1+\delta_{min}$ to $w_1$. Let $q'$ be the
position of $q_1$ for $\delta=0$ and let $d$ be the distance between
$q'$ and $q_1$. Then we have
$
\tan(30^{\circ}) = \frac{\delta_{min} + d}{1} =\delta_{min} + d
$. 
Furthermore, we have
$
d = \sqrt{(1+\delta_{min})^2 -1}
$ 
and therefore
$
\delta_{min} = \tan(30^{\circ}) - \sqrt{(1+\delta_{min})^2 -1}
$, 
which solves to
$\delta_{min} = \frac{1}{4} - \frac{1}{4\sqrt{3}} \approx 0.10566 $.
The factor by which $\eps$ can be multiplied is greater than
$1+\delta_{min}$ for all other gadgets. Thus, $\delta_{min}$ is the
critical value for the whole construction and the theorem follows.
\end{proof}

\subsection{Deciding the Strong Graph Distance in Exponential Time}
\label{sec:exponential_time_algo}
A brute-force method to decide the directed strong graph distance is
to iterate over all possible combinations of valid vertex
placements, which takes $O(m_1 \cdot m_2^{n_1})$. time.
Another approach is to decompose $G_1$ into faces and merge the 
substructures bottom-up.

First, we remove all tree-like substructures of $G_1$ and map these as
described in the proof of Lemma~\ref{lem:gd-map-tree}. Next, we
decompose the remainder of $G_1$ into chordless cycles, where a chord
is a maximal path in $G_1$ incident to two faces, see
Figure~\ref{fig:gd-plane-decomposition}.
\begin{figure}[b]
\centering
\includegraphics[page=1, width=0.7\textwidth]{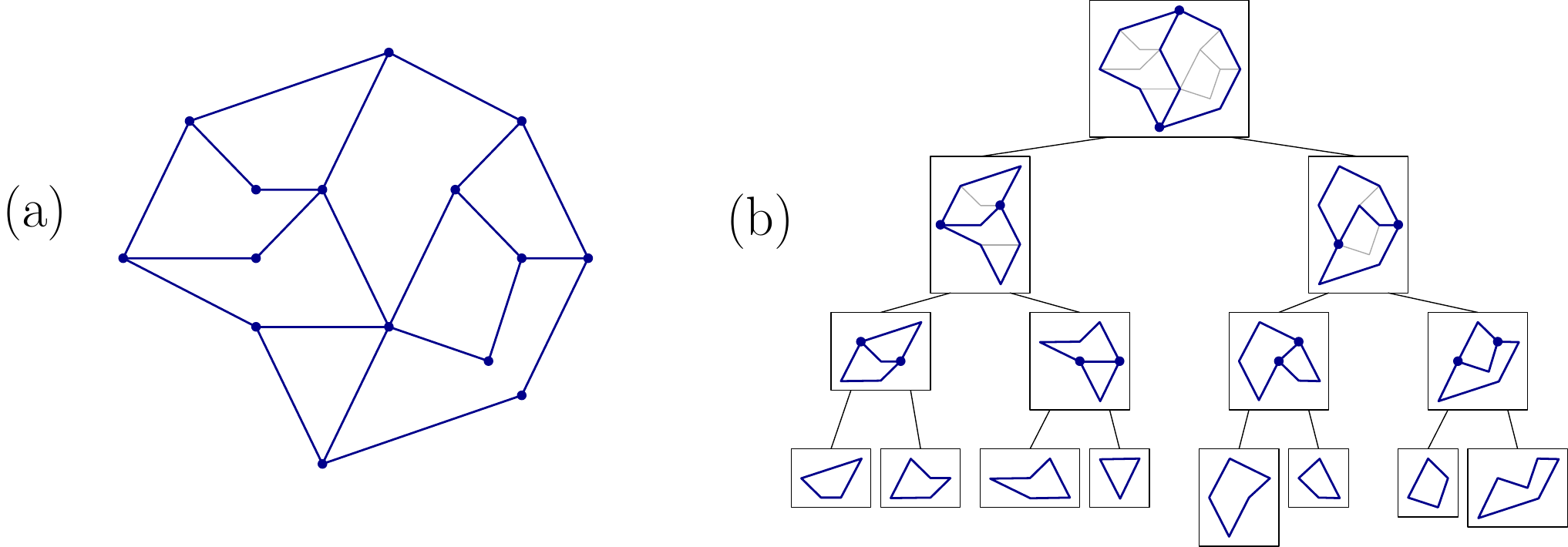}
\caption{A plane graph (a) is recursively decomposed into chordless cycles by splitting each cycle with a chord (b).}
\label{fig:gd-plane-decomposition}
\end{figure}
We map the parts of $G_1$
from bottom up, deciding in each step if we can map two adjacent
cycles and all the nested substructures of the cycles
simultaneously. To do so, we start with storing all combinations of
placements of endpoints of a chord -which separates two faces
(chordless cycles)- allowing us to map the two faces
simultaneously. We prune all placements which are not part of any
valid combination. In the following steps, for each placement $C_u$ of
an endpoint $u$ of a chord and each valid combination $c$ of nested
chords computed in the previous step, we run one graph
exploration. For each placement $C_v$ of the other endpoint $v$ of the
chord, which allow to map both cycles simultaneously, we store a new
combination consisting of $C_u$, $c$ and $C_v$. We prune all
placements of $u$ where we cannot reach a valid placement of $v$ by
using any of the previous computed combinations. Furthermore, we prune
those placements of $v$ which are never reached by any graph
exploration. If the list of placements gets empty for one vertex, we
can conclude, that the graph distance is greater than
$\eps$. Conversely, if we find a valid combination of placements of
the endpoints of the chord in the last step, we can conclude that we
can map the whole graph $G_1$ as we guarantee in each step that all
substructures can be mapped, too.

\begin{theorem}
\label{thm:exponential_time}
For plane graphs, the strong graph distance can be decided in $O(F
m_2^{2F-1})$ time and $O(m_2^{2F-1})$ space, where $F$ is the number
of faces of $G_1$.
\end{theorem}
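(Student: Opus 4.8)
The plan is to treat this theorem purely as a running-time and space analysis of the bottom-up, face-by-face decision procedure described just above it; its correctness---that finding a valid combination of endpoint placements in the final merge step certifies $\distDir(G_1,G_2)\le\eps$---has already been argued, so only the complexity bounds remain. First I would dispose of the tree parts: by \lemref{lem:gd-map-tree} every tree-substructure of $G_1$ can be mapped greedily once its attachment vertex is placed, contributing only lower-order cost, so I may assume $G_1$ is two-connected and focus on the chord decomposition. Recall that this decomposition recursively splits each cycle along a chord (a maximal path incident to two faces), producing the $F$ faces of $G_1$ as the chordless cycles at the leaves of a binary decomposition tree whose internal nodes are the chords. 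Since turning one region into $F$ pieces requires $F-1$ splits, there are $\Theta(F)$ chords, each with two endpoints in $V_1$, and this is the quantity that drives both bounds.

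Next I would analyze the space. A stored \emph{combination} assigns to each tracked chord endpoint one of its $\eps$-placements, and recall from \secref{sec:algorithm} that each vertex has only $O(m_2)$ placements. The invariant to verify is that a combination is retained exactly when its endpoint placements are \emph{simultaneously} realizable---i.e.\ the two incident chordless cycles together with all already-merged nested substructures admit strong $\eps$-placements consistent with these endpoints. The space bound then follows by bounding the number of endpoint slots ever tracked at once by $2F-1$ and multiplying by the $O(m_2)$ choices per slot, giving $O(m_2^{2F-1})$ combinations. Pinning down this exponent is the delicate step: across the $\Theta(F)$ chords there are $\Theta(F)$ endpoints, and I must show that in the worst case (for instance a caterpillar-shaped decomposition tree) no more than $2F-1$ of them need be carried in a single combination, because each merge step exposes only the two endpoints of the current chord on top of the nested combination it inherits.

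For the time I would charge the work to the $\Theta(F)$ merge steps. At the chord with endpoints $u,v$, for every pair formed by a placement $C_u$ of $u$ and a valid nested combination $c$ produced at the previous level, the algorithm runs a single graph exploration in $G_2$ to collect all placements $C_v$ of $v$ for which the chord and both incident cycles admit consistent strong $\eps$-placements; the strong \fr\ condition along each candidate path is tested with the dynamic-programming subroutine of~\cite{alt1995computing} at $O(m_2)$ total cost per exploration. Bounding the nested combinations of the previous level by $O(m_2^{2F-3})$ and multiplying by the $O(m_2)$ choices of $C_u$ yields $O(m_2^{2F-2})$ explorations per step, each costing $O(m_2)$, hence $O(m_2^{2F-1})$ per step and $O(F\, m_2^{2F-1})$ in total; storing the at most $m_2$ resulting values of $C_v$ enlarges each combination by the two endpoints $C_u,C_v$ and reproduces the $O(m_2^{2F-1})$ space bound.

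The main obstacle I anticipate is exactly the endpoint-slot count that fixes the exponent $2F-1$: I must argue carefully, using planarity and the structure of the decomposition tree, that merging never forces more than $2F-1$ chord-endpoint placements to be recorded simultaneously, and that one $O(m_2)$ exploration per $(C_u,c)$ pair suffices to certify that the two incident cycles and all nested substructures map simultaneously \emph{without backtracking}. This no-backtracking requirement is precisely what separates the strong case from the polynomial-time weak case of \lemref{lem:gd-weak-plane} and is where the strong \fr\ subroutine of~\cite{alt1995computing} must be invoked rather than mere tube-connectivity.
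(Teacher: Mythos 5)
Your proposal is correct, and it analyzes exactly the algorithm the paper intends (the bottom-up merge over the chord decomposition, with correctness delegated to the algorithm description preceding the theorem). But your complexity accounting takes a genuinely different route. The paper assumes the decomposition tree is \emph{balanced}, with depth $\log F$, and tracks how the number of stored combinations grows under the recursion $k \mapsto k^2 m_2^2$ (so level $i$ carries $m_2^{2^{i+1}-2}$ combinations); it then bounds the resulting sum $F\sum_{i=1}^{\log F} \frac{1}{2^i} m_2^{2^{i+1}-2}$ by a geometric series to obtain $O(F\, m_2^{2F-1})$ time and $O(m_2^{2F-1})$ space. You instead use a flat, shape-independent bound: since the $F-1$ chords have at most $2F-2$ endpoint slots in total and each slot admits $O(m_2)$ placements, \emph{every} merge step handles at most $O(m_2^{2F-3})\cdot O(m_2)$ pairs $(c, C_u)$, each triggering one $O(m_2)$ exploration, giving $O(m_2^{2F-1})$ per step and $O(F\,m_2^{2F-1})$ overall. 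Your version buys robustness: it holds for any merge order, including a caterpillar-shaped decomposition tree, whereas the paper's depth-$\log F$ assumption implicitly requires a centroid-style choice of splitting chords and is never justified. Also note that what you flag as the ``delicate step''---that no combination ever tracks more than $2F-1$ endpoint placements---is in fact immediate: there are only $2(F-1)$ chord endpoints in existence, so nothing beyond the chord count needs to be argued. The paper's level-by-level recursion, for its part, more faithfully reflects how the stored combinations actually grow (squaring at each merge), which is what pins down the root-level count $m_2^{2F-2}$; both analyses then relax to the stated $O(m_2^{2F-1})$ bounds.
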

\begin{proof}
Each graph exploration takes $O(m_2)$ time and in each node we have to
run $O(m_2k)$ explorations, where $k$ is the number of valid
combinations of endpoint placements from previously investigated
chords. As the tree has a depth of $\log(F)$, we have
\begin{ceqn}
\begin{align*}
 \frac{F}{2} m_2^2 + \frac{F}{4} m_2^2m_2^4 + \dots + m_2^2\left(m_2^{2^{log(F)-2}}\right)^2
 = F \sum_{i=1}^{\log(F)} \frac{1}{2^i} m_2^{2^{i+1}-2}
\end{align*}
\end{ceqn}
as the total running time of the graph explorations. We can upper bound this term as follows:
\begin{ceqn}
\begin{align*}
&F \sum_{i=1}^{\log(F)} \frac{1}{2^i} m_2^{2^{i+1}-2}
\leq F \sum_{i=1}^{\log(F)} m_2^{2^{i+1}-2}
\leq F \sum_{i=1}^{2^{\log(F)+1}-2} m_2^{i}\\
&= F \sum_{i=1}^{2F-2} m_2^{i}
= F \frac{m_2^{2F-1}-m_2}{m_2-1} \in O\left(F m_2^{2F-1}\right),
\end{align*}
\end{ceqn}
where the second equality uses
\begin{ceqn}
\begin{align*}
\sum_{i=1}^n a^i = \frac{a(a^n - 1)}{a-1},
\end{align*}
\end{ceqn}

for $a \in \mathbb{R}$.  In the first step, we have to store
$O(m_2^2)$ combinations for each two faces we want to map
simultaneously.  Let $k$ be the number of combinations in the previous
step. Then we have to store up to $k^2m_2^2$ combinations in the next
step. This results in storing $O(m_2^{2F-1})$ combinations in the root
node of the decomposition.
\end{proof}

Thus, this method 
is superior to the brute-force method if $2F-1 \leq n_1$.

\subsection{Approximation for Plane Graphs}
For plane graphs, \algoref{alg:general} yields an approximation depending on the angle between the edges
for deciding the strong graph distance. The decision is based on
the existence of valid placements. Therefore, the runtime is the same
as stated in Theorem \ref{thm:gd-strong-dist}.

\begin{theorem}
Let $G_1:=(V_1, E_1)$ and $G_2:=(V_2, E_2)$ be plane graphs.  Assume
that for all adjacent vertices $v_1$, $v_2 \in V_1$, $B_\eps(v_1)$ and
$B_\eps(v_2)$ are disjoint. Let $\alpha_v$ be the smallest angle
between two edges of $G_1$ incident to vertex $v$ with $deg(v) \geq
3$, and let $\alpha := \frac{1}{2} \min_{v\in V_1}(\alpha_v)$.  If
there exists at least one valid $\eps$-placement for each vertex of
$G_1$, then $\distDir(G_1, G_2) \leq \frac{1}{\sin(\alpha)} \eps$.
\label{thm:alpha-approximation}
\end{theorem}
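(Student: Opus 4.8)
The plan is to run the outer-placement construction from the proof of \lemref{lem:gd-weak-plane}, but applied to the (strong) valid $\eps$-placements guaranteed by the hypothesis, so as to obtain a single candidate graph mapping $s$, and then to show that every edge image $s(e)$ lies within \emph{strong} \fr $\tfrac{1}{\sin\alpha}\eps$ of $e$ --- even though $s$ is in general only a weak $\eps$-placement. For each edge $e=\{u,v\}$ the construction produces a path $P=s(e)\subseteq T_\eps(e)$ whose endpoints lie in $B_\eps(u)$ and $B_\eps(v)$, so the whole argument reduces to the single-edge bound $\delta_F(e,P)\le \tfrac{1}{\sin\alpha}\eps$, after which taking the maximum over all edges of $G_1$ gives $\distDir(G_1,G_2)\le\tfrac{1}{\sin\alpha}\eps$.

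The key geometric fact I would isolate is the following: if $e_1,e_2$ are two edges of $G_1$ incident to a common vertex $v$ whose angle is at least $2\alpha$, then, under the disjointness assumption on the $\eps$-balls,
\begin{ceqn}
\begin{align*}
T_\eps(e_1)\cap T_\eps(e_2)\ \subseteq\ B_{\eps/\sin\alpha}(v).
\end{align*}
\end{ceqn}
I would prove this by placing $v$ at the origin with $e_1,e_2$ symmetric about their bisector. A point lying within $\eps$ of both rays and having both feet at positive parameter must satisfy $|x\sin\alpha|\le\eps$ in the bisector coordinate, so the deepest such point is $(\eps/\sin\alpha,0)$, at distance exactly $\eps/\sin\alpha$ from $v$; every remaining point of the intersection falls in the $\eps$-cap around $v$ and is therefore even closer. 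This is the same critical configuration computed in \thmref{thm:c_hard}, now used as an upper rather than a lower bound.

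With this fact in hand I would bound $\delta_F(e,P)$ edge by edge. On the \emph{clean} part of the tube $T_\eps(e)$, i.e.\ outside $B_\eps(u)\cup B_\eps(v)$, which the disjointness hypothesis guarantees is an honest strip, the outermost path $P$ simply runs along the tube and is therefore monotone. Any failure of monotonicity is introduced only by the merging step, which reroutes $P$ through a vertex $w$ of $G_2$ shared by the outer placements of two adjacent faces; such a crossing vertex sits in the intersection of two incident tubes at the degree-$\ge 3$ endpoint of the shared path, so by the geometric fact it, and the entire rerouted subpath, lie in $B_{\eps/\sin\alpha}(v)$. I would then exhibit an explicit monotone reparametrization: traverse the clean strip monotonically, and while $P$ performs its excursion inside $B_{\eps/\sin\alpha}(v)$ hold the parameter on $e$ fixed at the nearer endpoint. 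Every point of the excursion is within $\eps/\sin\alpha$ of that endpoint, so the matched distance never exceeds $\eps/\sin\alpha$ and monotonicity is preserved. Vertices of degree one and two contribute no face merging and hence no forced backtracking, which is precisely why $\alpha$ is defined using only vertices of degree at least three.

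The step I expect to be the main obstacle is this localization: making it precise that the constructed edge images are monotone away from the tube-intersection regions and that \emph{all} non-monotonicity is confined to excursions contained in $B_{\eps/\sin\alpha}(v)$. This requires tracking exactly how the merging of \lemref{lem:gd-weak-plane} modifies a path and verifying that the modified subpath never leaves $T_\eps(e_1)\cap T_\eps(e_2)$; the disjointness hypothesis on adjacent $\eps$-balls is what keeps the clean strip of each tube free of interference and pins all the difficulty to neighborhoods of the vertices. Once this localization is in place, the per-edge bound $\delta_F(e,P)\le\tfrac{1}{\sin\alpha}\eps$ and hence the theorem follow.
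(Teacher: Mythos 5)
Your key geometric fact ($T_\eps(e_1)\cap T_\eps(e_2)\subseteq B_{\eps/\sin\alpha}(v)$ for incident edges meeting at angle at least $2\alpha$) is correct and is indeed the geometry underlying the theorem, but the reduction you build on top of it has a genuine gap: the localization claim in your last paragraph is false for the construction you invoke. The construction of \lemref{lem:gd-weak-plane} maps each edge $e$ to an \emph{outermost} path, i.e., a path in $T_\eps(e)$ at maximum distance from the face being processed, and nothing about this selection rule forbids backtracking in the middle of the tube. Concretely, suppose $G_2$ contains, inside $T_\eps(e)$ and away from both endpoint balls, a straight subpath running along $e$ together with a zigzag subpath hugging the far boundary of the tube, attached to the straight subpath at both ends. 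The hypothesis of the theorem is satisfied --- the straight subpath certifies a strong $\eps$-placement of $e$, so the relevant vertex placements are valid --- yet the outermost path is the zigzag, whose non-monotone excursions occur in your ``clean strip'' and can have length proportional to $|e|$, so $\delta_F(e,s(e))$ is not bounded by $\eps/\sin\alpha$, nor by any function of $\eps$ and $\alpha$ alone. Validity guarantees that \emph{some} monotone path exists for each edge, but the weak construction does not select it, and you cannot simply substitute the strong paths into that construction: the merging step relies on outermost-ness (the nesting ``$o_1$ inside $O_2$, $o_2$ inside $O_1$'') to make the images of adjacent cycles consistent at shared paths. That consistency problem (cf.\ \figref{fig:plane-valid-counterexample}) is exactly what your route leaves unresolved.

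The paper's proof proceeds differently and avoids this entirely: using the same tube-intersection geometry together with the disjoint-ball hypothesis, it shows that \emph{all} valid $\eps$-placements of any vertex of degree at least three merge into a \emph{single} connected valid $\frac{1}{\sin\alpha}\eps$-placement, while every valid $\eps$-placement remains valid at the larger scale. Uniqueness of the placement at every branching vertex removes the consistency problem, and degree-two chains between branching vertices are then mapped as in the proof of \lemref{lem:gd-map-tree}; the bound $\distDir(G_1,G_2)\le\frac{1}{\sin\alpha}\eps$ follows directly, with no weak placement and no after-the-fact repair of monotonicity. If you want to rescue your plan, you would have to replace the outermost-path rule by a rule selecting strong $\eps$-placement paths and then prove agreement at branching vertices --- which is, in essence, the uniqueness argument the paper uses.
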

\begin{proof}
\begin{figure}[t]
\centering
\includegraphics[scale=0.6]{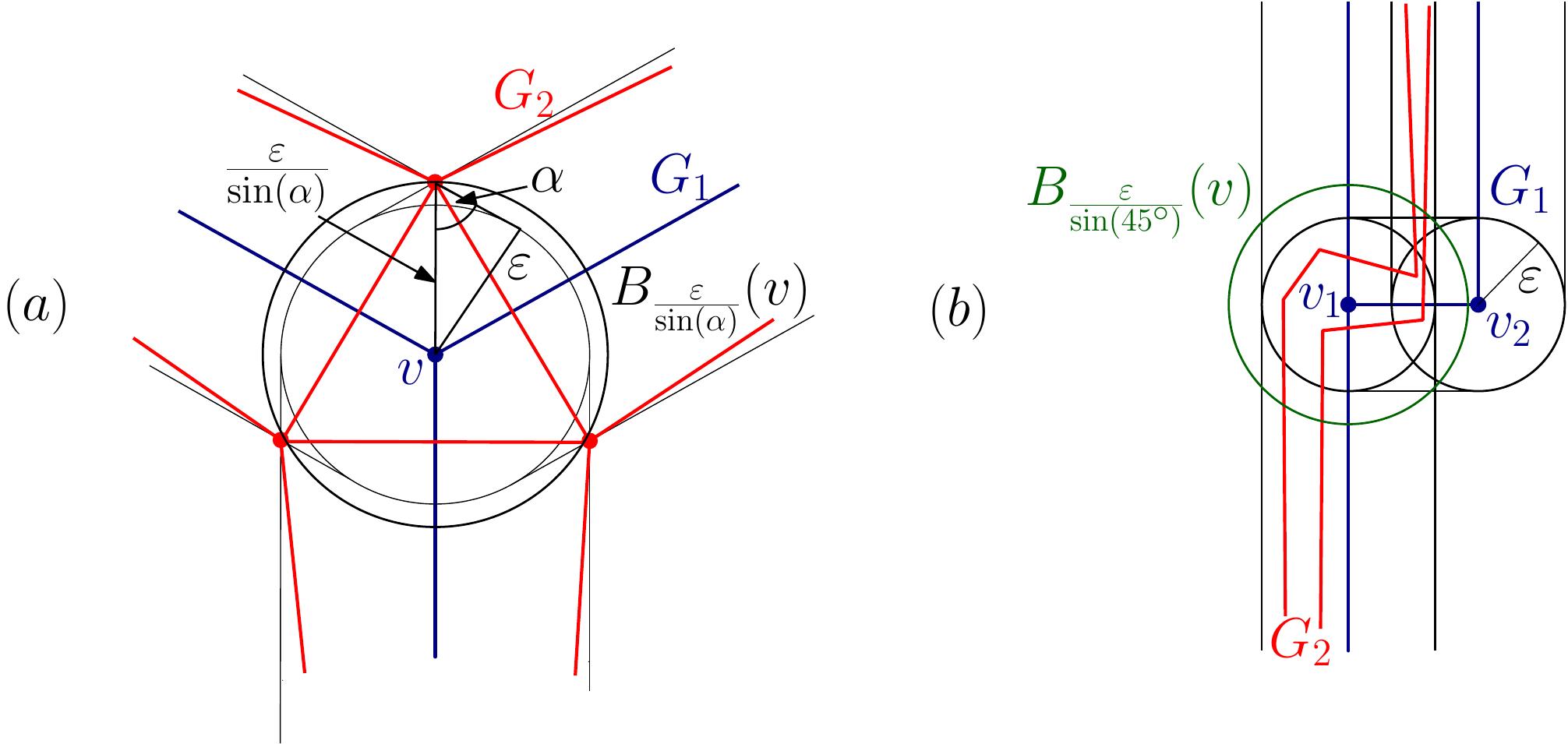}
\caption{Illustration of the proof of
  Theorem~\ref{thm:alpha-approximation}. (a) Three valid
  $\eps$-placements form one connected component inside a slightly
  larger ball around $v$. (b) In case of short edges, the placements
  remain unconnected also inside a larger surrounding of $v$. }
\label{fig:alpha-approximation}
\end{figure}
Let $\alpha$ be the smallest angle between two edges incident to a
vertex $v$ with degree at least three and let $C_1,C_2,\dots, C_j$ be
the valid placements of $v$ for a given distance value
$\eps$. Furthermore, let $V_{C_i}$ be the set of vertices of $C_i$. It
can be easily shown that for a larger distance value of $\hat{\eps}
\geq \frac{1}{\sin(\alpha)}\eps$ there exists vertices $v_1,v_2,\dots,
v_k$, embedded inside $B_{\hat{\eps} }$, such that the subgraph $C =
(V',E')$, where $V' = \bigcup\limits_{i=1}^{j} V_{C_i} \cup \{v_1\}
\cup \{v_2\} \cup \dots \cup \{v_k\}$ and $E'=\{\{uw\} \in E_2| u \in
V', w \in V'\}$ is connected, see
Figure~\ref{fig:alpha-approximation}~(a). Note that this property is
not true if $B_\eps(v_1) \cap B_\eps(v_2) \neq \emptyset$ for two
adjacent vertices $v_1$, $v_2 \in V_1$:
Figure~\ref{fig:alpha-approximation}~(b) shows an example with
$\alpha=45^\circ$ where the two placements do not merge inside
$B_{\frac{\varepsilon}{\sin(45^\circ)}}(v)$. However, with the
condition $B_\eps(v_1) \cap B_\eps(v_2) = \emptyset$, there is only
one valid $\frac{1}{\sin(\alpha)} \eps$-placement $C$ for each vertex
with degree at least three. Furthermore, every valid $\eps$ placement
is a valid $\frac{1}{\sin(\alpha)} \eps$-placement. Now, a path $P$ of
$G_1$ starting at a vertex $v$ with $deg(v)\geq 3$ and ending at a
vertex $w$ with $deg(w)\neq 2$, with vertices of degree two in the
interior of $P$, can be mapped as described in the proof of
Lemma~\ref{lem:gd-map-tree} . For two paths which start and/or end at
a common vertex $v$, $v$ is mapped to the same placement as there is
only one valid $\frac{1}{\sin(\alpha)} \eps$-placement of $v$. This
ensures that each edge of $G_1$ is mapped correctly.
\end{proof}

\section{Experiments on Road Networks}
\label{sec:experiments}
In the last decade several algorithms have been developed for
reconstructing maps from the trajectories of entities moving on the
network~\cite{akpw-cemca-15,akpw-mca-15}. This naturally asks to
assess the quality of such reconstruction algorithms.  Recently, Duran
et al~\cite{duran} compared several of these algorithms on hiking
data, and found that inconsistencies often arise due to noise and low
sampling of the input data, for example unmerged parallel roads or the
addition of short off-roads.

When assessing the quality of a network reconstruction from trajectory
data, several aspects have to be taken into account.  Two important
aspects are the \emph{geometric} and \emph{topological error} of the
reconstruction. Another important aspect is the \emph{coverage}, i.e.,
how much of the network is reconstructed from the data. We believe our
measures to be well suited for assessing the geometric error while
still maintaining connectivity information.

We have used the weak graph distance for measuring the distance
between different reconstructions and a ground truth, as well as a
simplification of part of the road network of Chicago.
\figref{fig:chicago_map_reconstruction}~(a) shows two reconstructed
road map graphs $R$ (red) and $B$ (blue), overlayed on the underlying
ground truth road map $G$ from OpenStreetMap.  The reconstruction $R$
in red resulted from Ahmed et al.'s algorithm \cite{Ahmed:2012},
whereas the reconstruction $B$ in blue from Davies et al.'s
\cite{Davies} algorithm. 
%
Our directed graph distance from $B$ to $G$ is 25 meters, and from $R$
to $G$ it is 90 meters.  This reflects the local geometric error of
the reconstructions (note that it does not evaluate the difference in
coverage). \figref{fig:chicago_map_reconstruction}~(b) shows an
example where the topology of $R$ and $G$ differs (blue circle),
affecting for instance navigation significantly. Our measure captures
this difference. Although the reconstruction approximates the geometry
well, our measure computes a directed distance of 200 m from $G$
(restricted to the part covered by $R$) to $R$.

\begin{figure}
\centering
\subfloat[Two partial map reconstructions of Chicago.]{
  \includegraphics[width=0.67\textwidth]{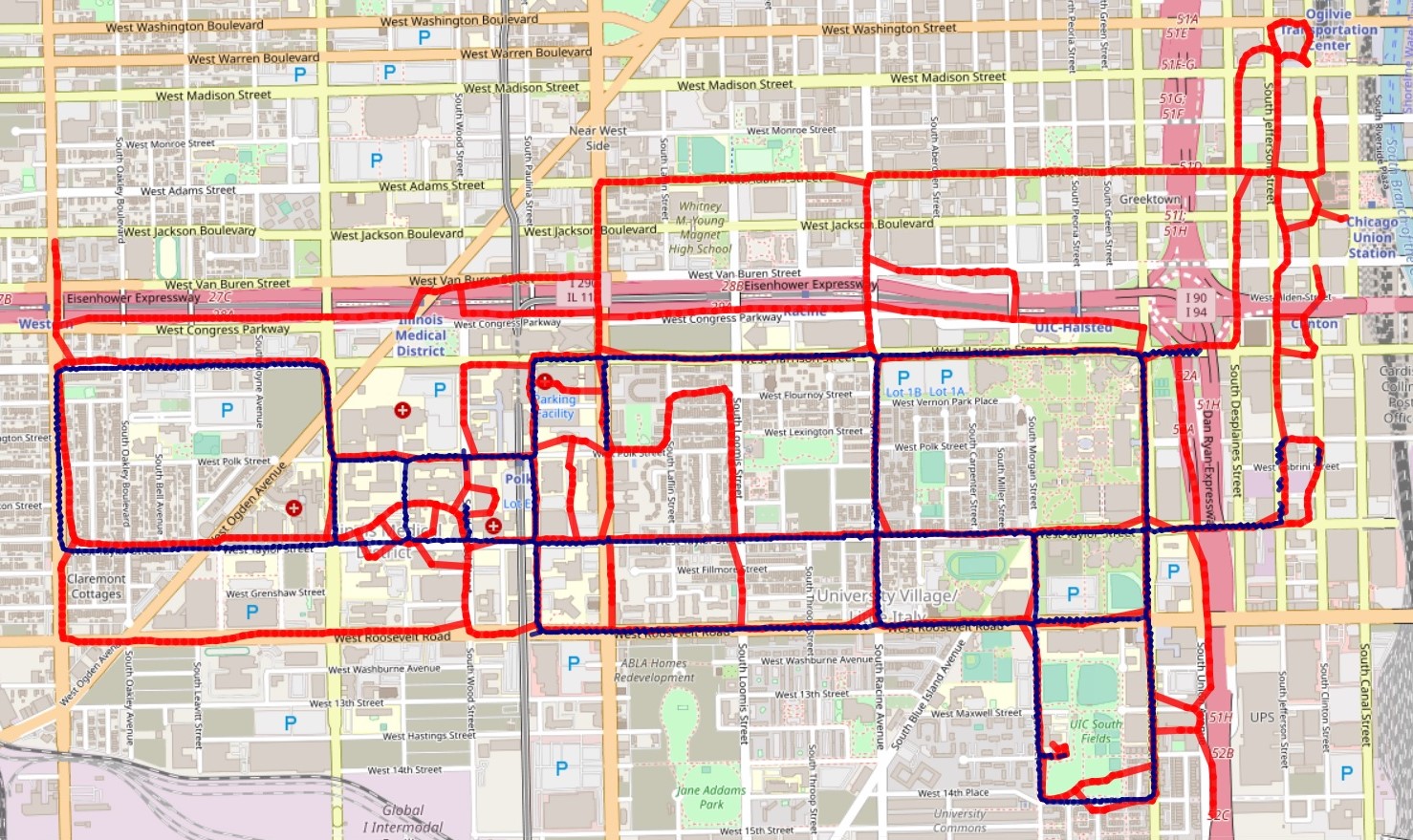}}
  \label{subfig:chicago_map_reconstruction}
  \hfill
  \subfloat[Different topology.]{
  \includegraphics[width=0.3\textwidth]{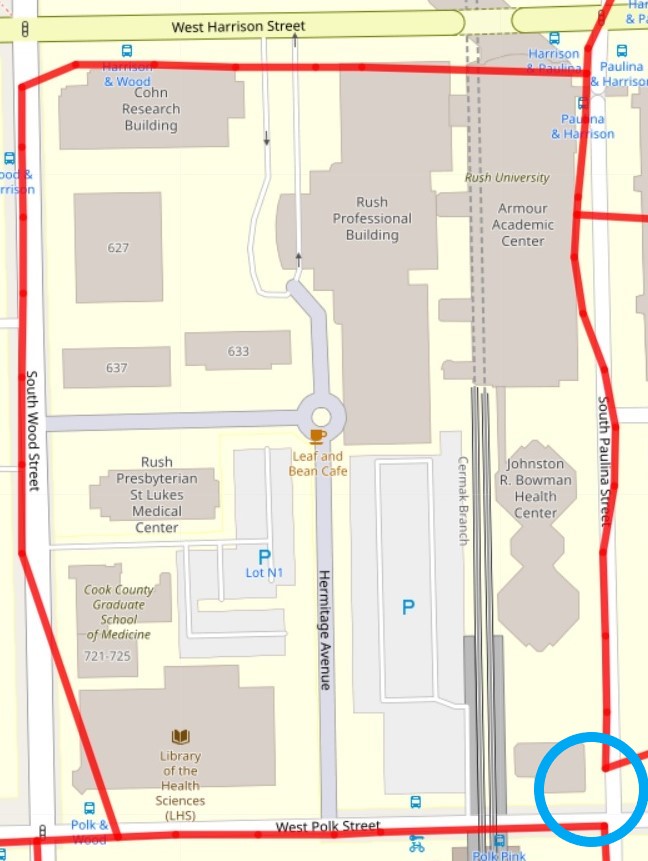}}
  \label{subfig:chicago_map_reconstruction_part}
  \caption{Two reconstructed road map graphs $R$ (red) and $B$ (blue), overlayed on the underlying ground truth road map $G$ from OpenStreetMap.}
\label{fig:chicago_map_reconstruction}
\end{figure}

Figure~\ref{fig:chicago_simplification} shows a partial road network
of Chicago at different resolutions. Both maps show vertices as blue
dots, connected by straight-line edges. Our approach yields a distance
of 22 meters between the graphs, which corresponds to the geometric
error of the lower resolution map in comparison to the higher
resolution map. The data is extracted from OpenSteetMap using the
OSMnx Python library.

\begin{figure}
\subfloat[Partial road network of Chicago.]{
  \includegraphics[width=0.54\textwidth]{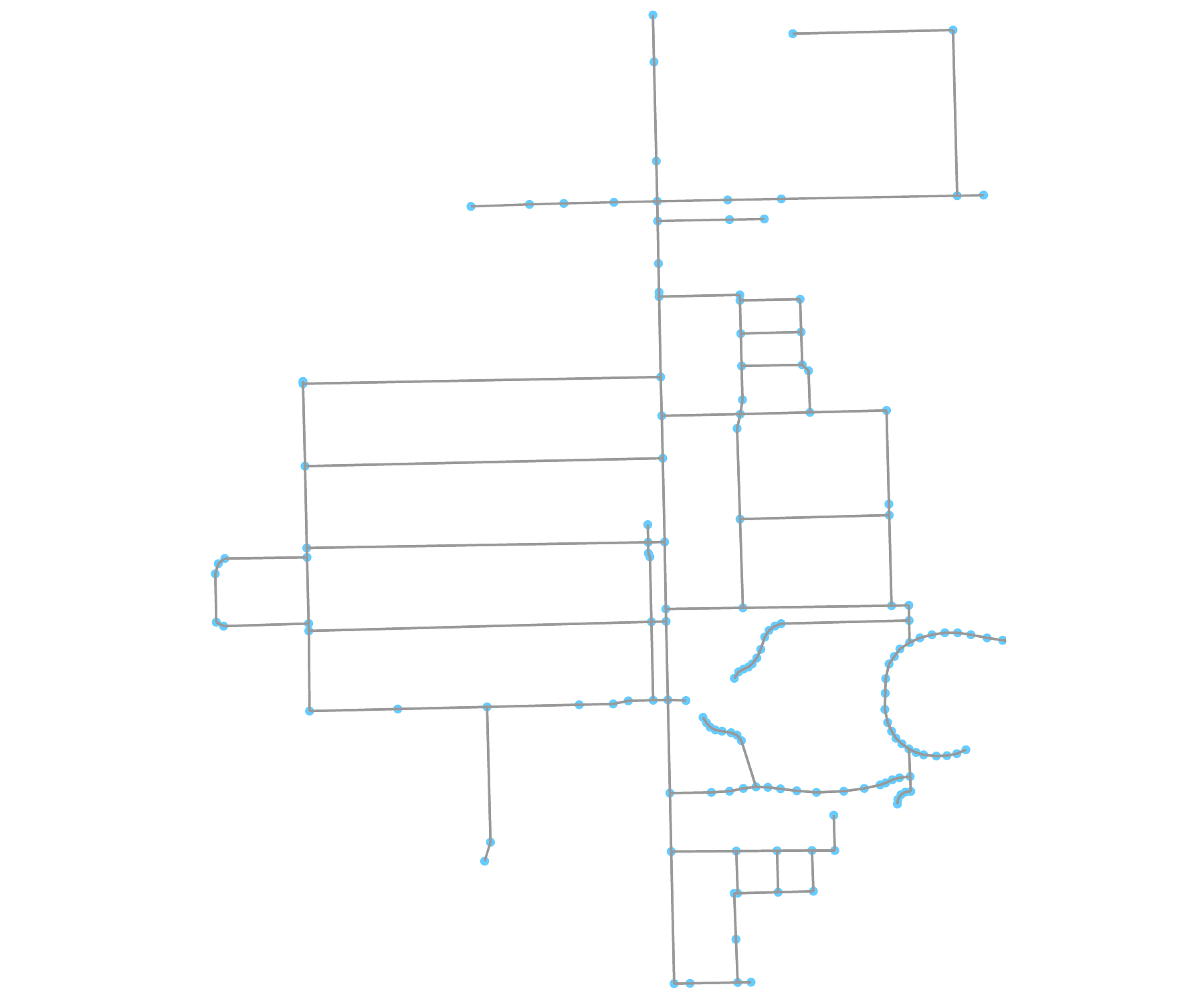}}
  \label{subfig:map_chicago}
  \hfill
  \subfloat[Simplified partial road network of Chicago.]{
  \includegraphics[width=0.44\textwidth]{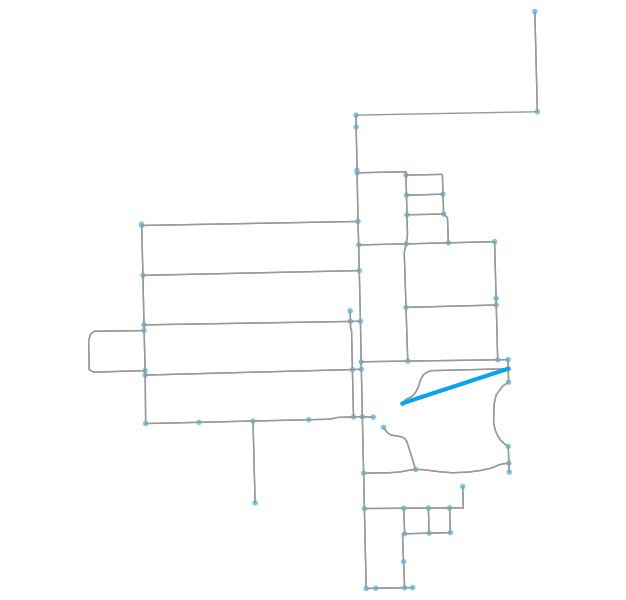}}
  \label{subfig:map_chicago_simple}
	\caption{The graph $G$ in (a) is a higher resolution map, while the graph $H$ in (b) is a lower resolution map that represents the road segment geometries with fewer vertices. Note that edges are all embedded as straight line segments.}
\label{fig:chicago_simplification}
\end{figure}

\clearpage
\section{Conclusion}
We developed new distances for comparing straight-line embedded graphs
and presented efficient algorithms for computing these distances for
several variants of the problem, as well as proving NP-hardness for
other variants.  Our distance measures are natural generalizations of
the \fr and the \wfr to graphs, without requiring the graphs to be
homeomorphic.  Although graphs are more complicated objects than
curves, the runtimes of our algorithms are comparable to those for
computing the \fr\ between polygonal curves.  A large-scale comparison
of our approach with existing graph similarity measures is left for
future work.



\bibliographystyle{plainurl}
\bibliography{graph-similarity}





%
%
\end{document}